\newtheorem{theorem}{Theorem}[section]
\newtheorem{lemma}[theorem]{Lemma}
\newcommand{\chan}{{\cal C}}
\newcommand{\R}{\mathscr{R}}
\renewcommand{\H}{\mathscr{H}}
\begin{document}

\title{Radio Network Lower Bounds Made Easy}

\author{Calvin Newport\\ Georgetown University\\ {\tt cnewport@cs.georgetown.edu}}

\date{}

\maketitle

\begin{abstract}
Theoreticians have studied distributed algorithms in the radio network model for close to three decades.
A significant fraction of this work focuses on lower bounds for basic communication problems
such as {\em wake-up} (symmetry breaking among an unknown set of nodes) and {\em broadcast}
(message dissemination through an unknown network topology).
In this paper, we introduce a new technique for proving this type of bound, based on reduction from a probabilistic hitting game,
 that simplifies and strengthens much of this existing work. 
In more detail, in this single paper we prove new expected time and high probability
lower bounds for wake-up and global broadcast in single and multi-channel versions
of the radio network model both with and without collision detection.
In doing so, we are able to reproduce results that previously spanned a half-dozen papers
published over a period of twenty-five years. In addition to simplifying these existing results, our technique, in many places,
also improves the state of the art: of the eight bounds we prove, four strictly strengthen the best known previous result
(in terms of time complexity and/or generality of the algorithm class for which it holds),
and three provide the first known non-trivial bound for the case in question.
The fact that the same technique can easily generate this diverse collection of lower bounds
indicates a surprising unity  underlying communication tasks in the radio network model---revealing that 
deep down, below the specifics of the problem definition and model assumptions,
 communication in this setting reduces to finding efficient strategies for a simple game.
  \end{abstract}




\section{Introduction}
\label{sec:intro}

In this paper, we introduce a new technique for proving lower bounds for basic communication tasks
in the radio network model.
We use this technique to unify, simplify, and in many cases strengthen the best known lower bounds
for two particularly important problems: wake-up and broadcast.

\noindent {\bf The Radio Network Model.}
The radio network model represents a wireless network as a graph $G=(V,E)$, where the nodes in $V$ correspond
to the wireless devices and the edges in $E$ specify links. Each node can broadcast messages to its neighbors
in $G$. If multiple neighbors of a given node broadcast during the same round, however, the messages are lost due to collision.
This model was first introduced by Chlamtac and Kutten~\cite{chlamtac:1985}, who used it to study centralized algorithms.
Soon after, Bar-Yehuda et al.~\cite{baryehuda:1987,baryehuda:1992} introduced the model to the distributed algorithms community
where variations have since been studied in a large number of subsequent papers; 
e.g.,~\cite{alon:1991,kushilevitz:1998,jurdzinski:2002,moscibroda:2005,colton:2006,kowalski:2005,czumaj:2006,dolev:2007,dolev:2008,gilbert:2009,dolev:2009,dolev:2011,daum:2012,daum:2012b,ghaffari:2013,ghaffari:2013b}.

Two of the most investigated problems in the radio network model are {\em wake-up} (basic symmetry breaking among an unknown
set of participants in a single hop network) and {\em broadcast} (propagating a message from a source to all nodes in an unknown multihop network).
Lower bounds for these problems are important because wake-up and/or broadcast reduce to most useful communication tasks in this setting, and
therefore capture something fundamental about the cost of distributed computation over radio links.

\noindent {\bf Our Results.}
In this paper, we introduce a new technique (described below) for proving lower bounds for wake-up and broadcast 
in the radio network model. 
We use this technique to prove 
 new expected time
and high probability lower bounds for these two problems in the single and multiple channel versions of
the radio network model both with and without collision detection.
In doing so, we reproduce in this single paper a set of existing results that spanned a half-dozen 
papers~\cite{willard:1986,kushilevitz:1998,jurdzinski:2002,colton:2006,dolev:2009,daum:2012b}
 published
over a period of twenty-five years. Our technique simplifies these existing arguments and establishes 
a (perhaps) surprising unity among these diverse problems and model assumptions.
Our technique also strengthens the state of the art.
All but one of the results proved in this paper improve the best known existing result by increasing
the time complexity and/or generalizing the class of algorithms for which the bound holds (many existing
bounds for these problems hold only for {\em uniform} algorithms that require nodes to use a pre-determined
sequence of independent broadcast probabilities; all of our lower bounds,
by contrast,
hold for all randomized algorithms).
In several cases, we prove the first known bound for the considered assumptions.

The full set of our results with comparisons to existing work are described in Figure~\ref{fig:results}. 
Here we briefly mention three highlights (in the following, $n$ is the network size and $D$ the network diameter).
In Section~\ref{sec:wakeup:cd}, we significantly simplify Willard's seminal $\Omega(\log\log{n})$ bound for expected time
wake-up with collision detection~\cite{willard:1986}. In addition, whereas Willard's result only holds for uniform algorithms,
our new version holds for all algorithms. In Section~\ref{sec:wakeup:multi}, we prove the first tight bound for high probability
wake-up with multiple channels and the first known expected time bound in this setting. 
And in Section~\ref{sec:bcast}, we prove that Kushilevitz and Mansour's oft-cited $\Omega(D\log{(n/D)})$ lower bound for expected time 
broadcast~\cite{kushilevitz:1998} {\em still holds} even if we assume multiple channels and/or collision detection---opening
an unexpected gap with the wake-up problem for which these assumptions improve the achievable time complexity.

\noindent {\bf Our Technique.}
Consider the following simple game which we call {\em $k$-hitting}.
A {\em referee} secretly selects a target set $T\subseteq \{1,2,...,k\}$.  
The game proceeds in rounds. In each round, a {\em player} (represented by a randomized algorithm) generates
a proposal $P$. If $|P\cap T|=1$, the player wins. Otherwise, it moves on to the next round.
In Section~\ref{sec:hitting}, we leverage a useful combinatorial result due to Alon et~al.~\cite{alon:1991}
to prove that this game requires $\Omega(\log^2{k})$ rounds to solve with high probability (w.r.t.~$k$),
and $\Omega(\log{k})$ rounds in expectation. (Notice, you could propose the sets of 
a {\em $(k,k)$-selective family}~\cite{clementi:2003} 
to solve this problem deterministically, but this would require $\Omega(k)$ proposals in the worst-case.)

These lower bounds are important because in this paper
we show that this basic hitting game reduces to solving
wake-up and broadcast under all of the different combinations of model assumptions that we consider.
In other words, whether or not you are solving wake-up or broadcast, assuming multiple channels or a single channel,
and/or assuming collision detection or no collision detection, if you can solve the problem fast you can solve this hitting game fast.
Our lower bounds on the hitting game, therefore, provide a fundamental speed-limit for basic communication tasks in the
radio network model.

The trick in applying this method is identifying the proper reduction argument for the assumptions in question.
Consider, for example,
 our reduction for wake-up with a single channel and no collision detection.
Assume some algorithm ${\cal A}$ solves wake-up with these assumptions in $f(n)$ rounds, in expectation.
As detailed in Section~\ref{sec:wakeup}, we can use ${\cal A}$ to define a player that solves
the $k$-hitting game in $f(k)$ rounds with the same probability---allowing the relevant hitting game lower bound to apply.
Our strategy for this case is to have the player simulate
${\cal A}$ running on all $k$ nodes in a network of size $k$. For each round of the simulation,
it proposes the ids of the nodes that broadcast, then simulates all nodes receiving nothing.
This is not necessarily a valid simulation of ${\cal A}$ running on $k$ nodes: {\em but it does not need to be.}
What we care about are the simulated nodes with ids in $T$: the (unknown to the player) target set for this instance of the hitting game.
The key observation is that in the {\em target execution} where only the nodes in $T$ are active, they will receive nothing until the first round where one node
broadcasts alone---solving wake-up. In the player's simulation, these same nodes are also receiving nothing (by the the player's fixed receive rule)
so they will behave the same way. This will lead to a round of the simulation where only one node from $T$ (and perhaps other nodes outside
of $T$) broadcast. The player will propose these ids,  winning the hitting game.

These reductions get more tricky as we add additional assumptions. Consider, for example, what happens
when we now assume collision detection. 
Maintaining consistency between the nodes in $T$ in the player simulation and the target execution
 becomes more complicated, as the player must now correctly simulate a collision event whenever two or more nodes
from $T$ broadcast---even though the player {\em does not know} $T$. Adding multiple channels only further complicates this
need for consistency. Each bound in this paper, therefore, is built around its own clever method for a hitting
game player to correctly
simulate a wake-up or broadcast algorithm in such a way that it wins the hitting game with the desired efficiency.
These arguments are simple to understand  and sometimes surprisingly elegant once identified, but can also be elusive
before they are first pinned down.

\noindent {\bf Roadmap.}
A full description of our results and how they compare to existing results is provided in Figure~\ref{fig:results}.
In addition, before we prove each bound in the sections that follow, we first 
discuss in more detail the relevant related work.
In Section~\ref{sec:model}, we formalize our model and the two problems we study.
In Section~\ref{sec:hitting}, we formalize and lower bound the hitting games at the core of our technique.
In Section~\ref{sec:sim}, we detail a general simulation strategy that we adopt in most
of our wake-up bounds (by isolating this general strategy in its own section we reduce redundancy).
Sections~\ref{sec:wakeup} to~\ref{sec:wakeup:cd:multi} contain our wake-up lower bounds,
and Section~\ref{sec:bcast} contains our broadcast lower bound.
(We only need one section for broadcast as we prove that the same result holds for all assumptions considered in this paper.)

\begin{figure}[t]
\centering
\begin{tabular}{|c||c|c|}
\hline
       & \bf Existing Results (exp. $\mid$ high)   & \bf This Paper (exp. $\mid$ high) \\ \hline \hline
\bf wake-up  &   $\Omega(\log{n})$ $\mid$ $\Omega(\log^2{n})$ \cite{jurdzinski:2002,colton:2006} &  $\Omega(\log{n})$ $\mid$ $\Omega(\log^2{n})$ (*) \\ \hline
\bf wake-up/cd  & $\Omega(\log\log{n})$ $\mid$ $\Omega(\log{n})$ \cite{willard:1986} & $\Omega(\log\log{n})$ $\mid$ $\Omega(\log{n})$ (*)  \\ \hline
\bf wake-up/mc  &  {\em (open)} $\mid$ $\Omega(\frac{\log^2{n}}{\chan\log\log{n}} + \log{n})$ \cite{dolev:2009,daum:2012b} & 
              $\Omega(\frac{\log{n}}{\chan} + 1)$ $\mid$ $\Omega(\frac{\log^2{n}}{\chan} + \log{n})$ (*)  \\ \hline
\bf wake-up/cd/mc  &  $\Omega(1)$ $\mid$ {\em (open)} & $\Omega(1)$ $\mid$ $\Omega(\frac{\log{n}}{\log{\chan}} + \log\log{n})$  \\ \hline \hline
\bf broadcast  &  $\Omega(D\log{(n/D)})$ \cite{kushilevitz:1998} & $\Omega(D\log{(n/D)})$  \\ \hline
\bf broadcast/cd/mc  & {\em (open)}  &  $\Omega(D\log{(n/D)}$ \\ \hline
\end{tabular}
\caption{\footnotesize This table summarizes the expected time (exp.) and high probability (high) results
for wake-up and broadcast in the existing literature as well as the new bounds proved in this paper.
In these bounds, $n$ is the network size, $\chan$ the number of channels, and $D$ the network diameter.
In the problem descriptions, ``cd" indicates the collision detection assumption and ``mc" indicates the multiple channels assumption.
In the existing results we provide citation for the paper(s) from which the results derive and use {\em ``(open)"} to indicate a previously open problem.
In all cases, the new results in this paper simplify the existing results. We marked some of our results with ``(*)" to indicate
that the existing results assumed the restricted {\em uniform} class of algorithms. All our algorithms hold for all randomized algorithms,
so any result marked by ``(*)" is strictly stronger than the existing result. We do not separate expected time and high probability
for the broadcast problems as the tight bounds are the same for both cases.}
\label{fig:results}
\end{figure}


\section{Model and Problems}
\label{sec:model}

In this paper we consider variants
 of the standard {\em radio network model}.
This model represents a radio network with a connected undirected graph $G=(V,E)$ of diameter $D$.
The $n=|V|$ nodes in the graph represent the wireless devices and the edges in $E$ capture communication proximity.
In more detail, executions in this model proceed in synchronous rounds. In each round, each node can choose to either
{\em transmit} a message or {\em receive}. In a given round, a node $u\in V$ can receive a message
from a node $v\in V$, if and only if the following conditions hold: (1) $u$ is receiving and $v$ is transmitting;  (2)
$v$ is $u$'s neighbor in $G$; and (3) $v$ is the {\em only} neighbor of $u$ transmitting in this round.
The first condition captures the half-duplex nature of the radio channel and the second condition
captures message collisions. 
To achieve the strongest possible lower bounds, we assume nodes are provided unique ids from $[n]$.
In the following, we say an algorithm is {\em uniform} if (active) nodes use a predetermined sequence of independent
broadcast probabilities to determine whether or not to broadcast in each round, up until they first receive a message.

In the {\em collision detection} variant of the radio network model, a receiving node $u$ can distinguish between silence (no neighbor
is transmitting) and collision (two or more neighbors are transmitting) in a given round. 
In this paper, to achieve the strongest possible lower bounds, when studying
single hop networks we also assume that a {transmitter} can distinguish between broadcasting
alone and broadcasting simultaneously with one or more other nodes.
In the {\em multichannel} variant of the radio network model,
we use a parameter $\chan\geq 1$ to indicate the number of orthogonal communication
channels available to the nodes. When $\chan>1$, we generalize the model to require each node to choose in each round
a single channel on which to participate. The communication rules above apply separately to each channel.
In other words, a node $u$ receives a message from $v$ on channel $c$ in a given round, if and only if in this round: (1) $u$ receives on $c$ and $v$ transmits on $c$;
 (2) $v$ is a neighbor of $u$; and (3) no other neighbor of $u$ transmits on $c$.

We study both {\em expected time} and {\em high probability} results, where we define the latter to mean probability at least $1-\frac{1}{n}$.
We define the notation $[i,j]$, for integers $i \leq j$, to denote the range $\{i,i+1,...,j\}$,
and define $[i]$, for integer $i>0$, to denote $[1,i]$.

\noindent {\bf Problems.}
The {\em wake-up} problem assumes a single hop network consisting of {\em inactive} nodes.
At the beginning of the execution, an arbitrary subset of these nodes are {\em activated} by an adversary.
Inactive nodes can only listen,
while  
active nodes execute an arbitrary randomized algorithm.
We assume  that active nodes have no advance knowledge of the identities of the other active nodes.
The problem is solved in the first round in which an active node broadcasts alone (therefore {\em waking up}
the listening inactive nodes).
When considering a model with collision detection, 
we still require that an active node broadcasts alone to solve the problem
(e.g., to avoid triviality, we assume that the inactive nodes need to receive a message to wake-up,
and that simply
detecting a collision is not sufficient\footnote{Without this restriction, the problem is trivially solved
by just having all active nodes broadcast in the first round.}).
When considering multichannel networks,
we assume the inactive nodes are all listening on the same known {\em default} channel (say, channel $1$).
To solve the problem, therefore, now requires that an active node broadcast alone on the default channel.

The {\em broadcast} problem assumes a connected multihop graph.
At the beginning of the execution,
a single {\em source} node $u$ is provided a message $m$.
The problem is solved once every node in the network has received $m$.
We assume nodes do not have any advance knowledge of the network topology.
As is standard,
we assume that nodes are inactive (can only listen) until they first receive $m$.
As in the wake-up problem, detecting a collision alone is not sufficient to activate an inactive node,
and in multichannel networks, we assume inactive nodes all listen on the default channel.

\section{The $k$-Hitting Game}
\label{sec:hitting}

The $k$-{\em hitting game}, defined for some integer $k>1$, 
assumes a {\em player} that faces off against an {\em referee}.
At the beginning of the game, the referee secretly selects a {\em target set} $T \subseteq \{1,...,k\}$.
The game then proceeds in rounds.
In each round, the player generates a {\em proposal} $P \subseteq \{1,...,k\}$.
If $|P \cap T| = 1$, then the player wins the game.
Otherwise, the player moves on to the next round learning no information other than the fact that
its proposal failed.
We formalize both entities as probabilistic automata and assume the player
does not know the referee definition and the referee does not know
the player's random bits. Finally, we define the {\em restricted} $k$-hitting game 
to be a variant of the game where the target set is always of size two.

\noindent {\bf A Useful Combinatorial Result.}
Before proving lower bounds for our hitting game 
we cite an existing combinatorial result that will aid our arguments.
To simplify the presentation of this result, we first define some useful notation.
Fix some integer $\ell >0$.
Consider two sets $A\subseteq \{1,2,...,\ell\}$ and $B\subseteq \{1,2,...\ell\}$.
We say that $A$ {\em hits} $B$ if $|A \cap B| = 1$.
Let an {\em $\ell$-family} be a family of non-empty subsets of $\{1,2,...,\ell\}$.
The {\em size} of an $\ell$-family $\mathscr{A}$, sometimes noted as $|\mathscr{A}|$,
 is the number of sets in $\mathscr{A}$.
Fix two $\ell$-families $\mathscr{A}$ and $\mathscr{B}$.
We say $\mathscr{A}$ {\em hits} $\mathscr{B}$, if for every $B\in \mathscr{B}$ there
exists an $A\in \mathscr{A}$ such that $A$ hits $B$.
Using this notation, we can now present the result:

\vfill\eject

\begin{lemma}[\cite{alon:1991,ghaffari:2013}]
There exists a constant $\beta > 0$,
such that for any integer $\ell >1$,
these two results hold:

\begin{enumerate}

  \item There exists
            an $\ell$-family $\mathscr{R}$, where $|\mathscr{R}| \in O(\ell^8)$,
            such that for every $\ell$-family $\mathscr{H}$ that hits $\mathscr{R}$,
    $|\mathscr{H}| \in \Omega(\log^2{\ell})$.
    
  \item There exists
            an $\ell$-family $\mathscr{S}$, where $|\mathscr{S}| \in O(\ell^8)$,
such that for every $H \subseteq \{1,2,...,\ell\}$, $H$ hits at
  most a $(\frac{1}{\beta\log{(\ell)}})$-fraction of the sets in $\mathscr{S}$.
  
  \end{enumerate}
  
\label{lem:alon}
\end{lemma}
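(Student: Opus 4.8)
The plan is to establish both parts by the probabilistic method, using a single multi-scale random construction whose hardness is governed by the function $x \mapsto x e^{-x}$. I would build the candidate family by choosing, for each of the $d = \lceil \log \ell \rceil$ \emph{density scales} $i \in \{1,\dots,d\}$, a batch of $m$ independent random subsets of $\{1,\dots,\ell\}$ in which each element is included independently with probability $p_i = 2^{-i}$. The same family serves as the candidate for both $\mathscr{R}$ and $\mathscr{S}$; its total size is $md$, and taking $m = \Theta(\ell^7)$ (so $md = \Theta(\ell^7 \log \ell)$) keeps it within the stated $O(\ell^8)$ while leaving polynomial room for the union bounds below.

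For part~2, the key computation is that for a fixed hitter $H$ with $|H| = h$ and a random set $B$ at scale $i$, $\Pr[|H \cap B| = 1] = h p_i (1-p_i)^{h-1}$, which, writing $x = h p_i$, behaves like $\Theta(x e^{-x})$. This quantity is at most $1/e$, is $\Theta(1)$ only when $x = \Theta(1)$ (the single matched scale $2^{-i} \approx 1/h$), and decays geometrically as $i$ moves away from that scale. Summing over all $d$ scales therefore gives an expected hit-fraction of $O(1/d) = O(1/\log \ell)$ for \emph{every} fixed $H$. Since the $md$ hitting indicators for a fixed $H$ are independent across the chosen sets, a Chernoff bound keeps the realized fraction below $\frac{1}{\beta \log \ell}$ except with probability $e^{-\Omega(m)}$; with $m$ a sufficiently large polynomial this beats a union bound over all $2^\ell$ choices of $H$, so one family works simultaneously for all $H$. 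This is where the constant $\beta$ is fixed.

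For part~1, I would argue that the same kind of family forces $\Omega(\log^2 \ell)$ hitters, and the extra logarithmic factor over part~2 is exactly the main obstacle: the elementary counting argument (each $H$ hits an $O(1/\log\ell)$-fraction, so covering all of $\mathscr{S}$ needs $\Omega(\log\ell)$ hitters) yields only $\Omega(\log \ell)$. The two structural facts to exploit are (i) a single hitter of size $h$ can hit a non-negligible fraction of the sets at only $O(1)$ scales, namely the matched scale and its neighbors, by the geometric decay of $x e^{-x}$ away from its peak; and (ii) within a single fixed scale, no small collection of matched hitters can catch all $m$ random sets, so $\Omega(\log \ell)$ hitters must be dedicated to each of the $\Omega(\log \ell)$ scales, multiplying out to $\Omega(\log^2 \ell)$. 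Fact~(i) is a clean tail estimate; fact~(ii) is the delicate one — one cannot treat the hitters as independently covering a constant fraction, since the hitters are adversarial and the "hit'' events for a given $B$ are correlated and non-monotone, so a genuine second-moment / careful union argument is required rather than a naive covering bound.

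The most delicate point — and where I expect the real work to concentrate — is that (i) and (ii) concern a single random draw, whereas the lemma demands one \emph{fixed} $\mathscr{R}$ that defeats \emph{every} adversarial $\mathscr{H}$. I would handle this with the probabilistic method: show that a random $\mathscr{R}$ fails against a fixed family $\mathscr{H}$ of size $s < c \log^2 \ell$ only with probability $e^{-\Omega(md\cdot q)}$, where $q$ is a polynomial lower bound on the probability that a random set of $\mathscr{R}$ is left un-hit, and then union bound over all at most $2^{\ell s}$ candidate families. Since $s = O(\log^2 \ell)$ gives $2^{O(\ell \log^2 \ell)}$ families, I need the per-family failure probability below this threshold, which is precisely what forces $md = |\mathscr{R}|$ to be a large enough polynomial in $\ell$; the stated $O(\ell^8)$ is a convenient such choice. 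Because the statement is attributed to Alon et~al.\ and Ghaffari et~al., a shorter alternative is simply to quote the set systems underlying their radio-broadcast lower bounds, which already encode exactly this hitting-hardness; but the self-contained multi-scale construction above is the version I would write out, with the uniform-over-all-$\mathscr{H}$ argument for part~1 being the crux.
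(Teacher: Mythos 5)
First, a framing point: the paper does not actually prove this lemma---it is imported as a black box, with part~1 credited to Alon et~al.\ and part~2 to Ghaffari et~al.---so there is no in-paper proof to compare against, and your self-contained attempt is going beyond what the paper does. Judged on its own terms, your argument for part~2 is essentially the standard one and is sound: the multi-scale random family, the computation $\Pr[|H\cap B|=1]=hp_i(1-p_i)^{h-1}=\Theta(xe^{-x})$ with $x=hp_i$, the observation that summing over the geometric sequence of scales gives $O(1)$ and hence an expected hit-fraction of $O(1/\log\ell)$, and the Chernoff-plus-union bound over the $2^{\ell}$ choices of $H$ all go through (in fact $m=\Theta(\ell)$ already suffices, so $O(\ell^{8})$ is generous; the only nit is that you must discard or condition away empty draws, since an $\ell$-family consists of non-empty sets).

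For part~1, however, what you have written is a plan whose central step is named but not carried out---and you say so yourself. The union bound over the at most $2^{\ell s}$ adversarial families $\mathscr{H}$ of size $s<c\log^{2}\ell$ requires, for \emph{each} such $\mathscr{H}$, some scale $i$ at which a single random scale-$i$ set is missed by all of $\mathscr{H}$ with probability $q\geq \ell^{-O(1)}$. Neither of your structural facts produces this bound: fact~(ii), ``$\Omega(\log\ell)$ hitters must be dedicated to each scale,'' is essentially the conclusion you are trying to prove, not an ingredient, and fact~(i) only controls a single hitter. The natural way to fill the hole---pigeonhole a scale $i$ to which few members of $\mathscr{H}$ are matched, then lower-bound the miss probability by forcing the random set to be disjoint from the union of those members, paying $(1-p_i)^{\sum_j |H_j|}$---is the right shape, but making the ``matched window'' narrow enough that few hitters land in some scale, while keeping the out-of-window hitters' total hitting probability $o(1)$ under a union bound over up to $c\log^{2}\ell$ of them, is exactly the delicate bookkeeping that carries the weight of Alon et~al.'s theorem; done naively it loses $\log\log\ell$ factors. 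Until that estimate is actually established, part~1 of your proof is an outline rather than a proof, so as a self-contained derivation the proposal has a genuine gap; as a reconstruction of the cited argument's architecture it is on the right track.
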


\noindent The first result from this lemma was proved in a 1991 paper by Alon et~al.~\cite{alon:1991}.
It was established using the probabilistic method and was then used to 
prove a $\Omega(\log^2{n})$ lower bound
on {\em centralized} broadcast solutions in the radio network model.
The second result is a straightforward consequence of the analysis
used in~\cite{alon:1991},
recently isolated and proved by Ghaffari et~al.~\cite{ghaffari:2013}.


%
\noindent {\bf Lower Bounds for the $k$-Hitting Game.}
We now prove lower bounds on our general and restricted $k$-hitting games.
These results, which concern probabilities,
 leverage Lemma~\ref{lem:alon}, which concerns combinatorics,
  in an interesting way which depends on the size of $\R$ and $\mathscr{S}$ being polynomial in $\ell$.
 
\begin{theorem}
Fix some player ${\cal P}$ that guarantees, for all $k > 1$,
to solve the $k$-hitting game in $f(k)$ rounds, in expectation.
It follows that $f(k) \in \Omega(\log{k})$.
\label{thm:hitting:expectation}
\end{theorem}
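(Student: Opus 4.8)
The plan is to derive the bound from the second part of Lemma~\ref{lem:alon}, instantiated with $\ell = k$. That part provides a $k$-family $\mathscr{S}$ with $|\mathscr{S}| \in O(k^8)$ and the crucial property that no single subset $H \subseteq \{1,\dots,k\}$ hits more than a $\frac{1}{\beta\log k}$-fraction of its members. The natural move is to let the referee adopt a randomized strategy: choose the target set $T$ uniformly at random from $\mathscr{S}$. Since a proposal $P$ wins in a round exactly when $|P \cap T| = 1$, i.e. when $P$ hits $T$, for any \emph{fixed} proposal $P$ the lemma immediately bounds the single-round win probability:
\[
  \Pr_{T}\bigl[\,|P \cap T| = 1\,\bigr] \;=\; \frac{\left|\{S \in \mathscr{S} : P \text{ hits } S\}\right|}{|\mathscr{S}|} \;\le\; \frac{1}{\beta \log k}.
\]

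Next I would control the entire game by conditioning on the player's internal randomness $r$. Because the only feedback ${\cal P}$ ever receives before winning is that its previous proposals failed, fixing $r$ fixes a deterministic sequence of proposals $P_1, P_2, \dots$ (the sets the player would put forward assuming every earlier round failed). For this fixed sequence, with $T$ still uniform over $\mathscr{S}$, the event that the player has won within the first $t$ rounds is contained in $\bigcup_{i=1}^{t}\{P_i \text{ hits } T\}$, so a union bound combined with the displayed inequality gives
\[
  \Pr\bigl[\,\text{win within } t \text{ rounds} \mid r\,\bigr] \;\le\; \frac{t}{\beta \log k}.
\]
Writing $p = \frac{1}{\beta \log k}$ and letting $X$ be the winning round, this yields $\Pr[X > t \mid r] \ge 1 - tp$. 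Summing $\Pr[X > t \mid r]$ over $t = 0,1,\dots,\lfloor 1/(2p)\rfloor$ — a range over which every term is at least $\tfrac12$ — gives $\mathbb{E}[X \mid r] \ge \frac{1}{4p} = \frac{\beta\log k}{4}$. The key point is that this lower bound does not depend on $r$, so it survives taking the expectation over the player's randomness.

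Finally I would connect this to $f(k)$. By hypothesis ${\cal P}$ solves the game in expected time at most $f(k)$ against \emph{every} target set, hence against every distribution over target sets, and in particular against the uniform distribution on $\mathscr{S}$; but the preceding paragraph shows the expected running time under that distribution is at least $\frac{\beta \log k}{4}$, so $f(k) \ge \frac{\beta\log k}{4} \in \Omega(\log k)$. The one step that warrants genuine care — and the place I would be most explicit — is this Yao-style passage from a worst-case guarantee to an average-case one: because $f(k)$ upper-bounds the expected time for each individual $T$, the average over a randomized referee cannot exceed $f(k)$, and this is exactly what lets a purely combinatorial fact about the fixed family $\mathscr{S}$ translate into a lower bound against an arbitrary (adaptive, randomized) player. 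Everything else is a routine union bound and tail-sum estimate, valid for all sufficiently large $k$, which suffices for the asymptotic claim.
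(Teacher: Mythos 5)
Your proposal is correct and follows essentially the same route as the paper: both invoke the second part of Lemma~\ref{lem:alon} with $\ell=k$, have the referee draw $T$ uniformly from $\mathscr{S}$, and observe that any $O(\log k)$-length proposal sequence hits only a small fraction of $\mathscr{S}$. You merely make explicit the steps the paper leaves terse (conditioning on the player's coins, the union bound, the tail-sum estimate for the expectation, and the worst-case-to-average-case passage), which is a welcome tightening but not a different argument.
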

\begin{proof}
Fix any $k>1$.
Let $\beta$ and $\mathscr{S}$ be the constant and $\ell$-family
provided by the second result of Lemma~\ref{lem:alon} applied to $\ell=k$.
The lemma tells us that for any $P \subseteq[k]$, 
$P$ hits at most a $(\frac{1}{\beta\log{k}})$-fraction of the sets in $\mathscr{S}$.
It follows that for any $k$-family $\mathscr{H}$,
such that $|\mathscr{H}| < \frac{\beta\log{k}}{2}$,
$\mathscr{H}$ hits less than half the sets in $\mathscr{S}$.

We now use these observations to prove our theorem.
Let ${\cal P}$ be a $k$-hitting game player. 
Consider a referee that selects the target set by choosing a set $T$ from $\mathscr{S}$ with uniform randomness.
Let $\mathscr{H}$ be 
the first $\lfloor \frac{\beta\log{k}}{2} \rfloor - 1$ proposals generated by ${\cal P}$ in a given instance of the game.
By our above observation,
this sequence of proposals
hits less than half the sets in $\mathscr{S}$.
Because the target set was chosen from $\mathscr{S}$ with randomness that was uniform and independent
of the randomness used by ${\cal P}$ to generate its proposals, 
it follows that the probability that $\mathscr{H}$ hits the target is less than $1/2$.
To conclude, we note that $f(k)$ must therefore be larger than $\lfloor \frac{\beta\log{k}}{2} \rfloor - 1 \in \Omega(\log{k})$,
as required by the theorem.
\end{proof}


\begin{theorem}
Fix some player ${\cal P}$ that guarantees, for all $k > 1$,
to solve the $k$-hitting game in $f(k)$ rounds with probability at least $1-\frac{1}{k}$.
It follows that $f(k) \in \Omega(\log^2{k})$.
\label{thm:hitting:high}
\end{theorem}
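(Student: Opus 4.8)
The plan is to follow the template of the proof of Theorem~\ref{thm:hitting:expectation}, but to drive the argument with part~1 of Lemma~\ref{lem:alon} (the $\Omega(\log^2 \ell)$ hitting bound) in place of the averaging bound of part~2. The subtlety is that part~1 is an all-or-nothing statement: it guarantees only that an $\ell$-family of size $o(\log^2\ell)$ fails to hit \emph{at least one} set of $\mathscr{R}$, which by itself yields a failure probability of merely $1/|\mathscr{R}|$. To upgrade this single miss into a failure probability exceeding $1/k$, I would not apply the lemma at $\ell = k$, but rather at a reduced parameter $\ell = \Theta(k^{1/8})$, chosen precisely so that the polynomial size bound $|\mathscr{R}| \in O(\ell^8)$ forces $|\mathscr{R}| < k$. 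This is exactly where the polynomial size of $\mathscr{R}$ is essential, and it is also why the reduced parameter still satisfies $\log^2\ell = \Theta(\log^2 k)$, preserving the target bound.

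Concretely, fix $k$ large, let $C$ be the constant hidden in $|\mathscr{R}| \in O(\ell^8)$, and pick $\ell$ as large as possible subject to $C\ell^8 < k$, so that $\ell = \Theta(k^{1/8})$ and $|\mathscr{R}| < k$. Let $\mathscr{R}$ be the corresponding $\ell$-family from part~1, viewed as a family of subsets of $[\ell] \subseteq [k]$. Consider a referee that selects the target $T$ uniformly at random from $\mathscr{R}$. Now fix the player's random bits. Because the only feedback the player ever receives before winning is ``proposal failed,'' the sequence of proposals it generates along this failure path is a fixed sequence $P_1, P_2, \dots$ determined by the random bits alone; against a given target $T$, the player wins at exactly the first index $i$ with $|P_i \cap T| = 1$. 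Discarding proposals disjoint from $[\ell]$ (these can never hit a target in $\mathscr{R}$), the first $f(k)$ proposals, intersected with $[\ell]$, form an $\ell$-family $\mathscr{H}$ with $|\mathscr{H}| \le f(k)$.

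The core step is then immediate. Suppose for contradiction that $f(k)$ is below the $\Omega(\log^2\ell)$ threshold of part~1, so that $\mathscr{H}$ cannot hit all of $\mathscr{R}$: there is some $B \in \mathscr{R}$ that none of the first $f(k)$ proposals hits. Conditioned on this fixing of the random bits, the player therefore fails whenever $T = B$, an event of probability $1/|\mathscr{R}| > 1/k$ by our choice of $\ell$. Since this lower bound on the conditional failure probability holds for every fixing of the random bits, averaging over the bits (using that $T$ is chosen independently of them) shows the overall failure probability exceeds $1/k$, contradicting the assumed $1 - 1/k$ success guarantee. Hence $f(k)$ must meet the threshold, i.e.\ $f(k) \in \Omega(\log^2\ell) = \Omega(\log^2 k)$.

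The one genuine obstacle is the mismatch between the coarse all-or-nothing nature of part~1 and the fine-grained $1/k$ failure probability the theorem demands; the resolution is the calibration $\ell = \Theta(k^{1/8})$, which I expect to be the only truly clever step. The supporting observations---that fixing the random tape freezes the proposal sequence because the feedback carries no information, and that empty or out-of-range proposals may be dropped without affecting hitting---are routine once noticed.
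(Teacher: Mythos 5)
Your proof is correct and follows essentially the same route as the paper's: both invoke part~1 of Lemma~\ref{lem:alon}, have the referee draw the target uniformly from $\mathscr{R}$, convert the single guaranteed missed set into a failure probability of $1/|\mathscr{R}| > 1/k$, and exploit $|\mathscr{R}| \in O(\ell^8)$ so that $\log^2\ell = \Theta(\log^2 k)$. The only (immaterial) difference is the direction of the calibration---the paper fixes $\ell$ and sets $k = |\mathscr{R}|+1$, whereas you fix $k$ and choose $\ell = \Theta(k^{1/8})$---and your explicit conditioning on the player's random tape makes rigorous a step the paper treats more informally.
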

\begin{proof}
Fix any $\ell>1$.
Let  $\mathscr{R}$ be the $\ell$-family
provided by the first result of Lemma~\ref{lem:alon} applied to this value.
Let $t=|\mathscr{R}|$. We know from the lemma that $t \in O(\ell^8)$.


To achieve our bound, we will consider the behavior of a player ${\cal P}$
in the $k$-hitting game for $k = t + 1$.
As in Theorem~\ref{thm:hitting:expectation}, 
we have our referee select its target set by choosing a set from $\mathscr{R}$ with uniform randomness.
(Notice, in this case, our referee is actually making things {\em easier} for the player
by restricting its choices to only the values in $[\ell]$ even though the game is defined
for the value set $[k]$, which is larger. As we will show, this advantage does
not help the player much.)

Let $c\log^2{(\ell)}$, for some constant $c>0$,
be the exact lower bound from the first result of Lemma~\ref{lem:alon}.
Let
$\mathscr{H}$ be the first $\lfloor c\log^2{(\ell)} \rfloor - 1$ proposals
generated by ${\cal P}$ in a given instance of the game.
Lemma~\ref{lem:alon} tells us that there is at least one set $R\in \mathscr{R}$ that $\mathscr{H}$ does not hit.
Because the target set was chosen from $\mathscr{R}$ with randomness that was uniform and independent
of the randomness used by ${\cal P}$, 
it follows that the probability that $\mathscr{H}$ misses the target is at least $1/t$ (recall that $t$ is the size of $\mathscr{R}$).
Inverting this probability,
it follows that the probability that ${\cal P}$ wins the game with the proposals represented by $\mathscr{H}$ is less
than or equal to $1-\frac{1}{t} =1 - \frac{1}{k-1} < 1 - \frac{1}{k}$.
It follows that $f(k)$ must be larger than $|\mathscr{H}|$ and therefore
must be of size at least $c\log^2{(\ell)}\in \Omega(\log^2{(\ell)})$.
To conclude the proof, 
we note that $k \in O(\ell^8)$,
and therefore we can express $\ell$ in terms of $k$ as some polynomial in $\Theta(k^{1/d})$, for some positive constant $d \leq 8$.
Substituting for $\ell$ in our above equation,
it follows that $f(k)\in \Omega(\log^2{(\ell)}) \in \Omega(\log^2{(k^{1/d})}) \in \Omega(\log^2{(k)})$,
as required by the theorem.
\end{proof}


\begin{theorem}
Fix some player ${\cal P}$ that guarantees, for all $k> 1$,
to solve the {\em restricted} $k$-hitting game in $f(k)$ rounds with probability at least $1-\frac{1}{k}$.
It follows that $f(k) \in \Omega(\log{k})$.
\label{thm:hitting:restricted}
\end{theorem}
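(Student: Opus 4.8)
The plan is to mirror the structure of the two preceding proofs---have the referee pick its target from a fixed distribution and then bound the player's success probability---but, because the target now has size exactly two, I expect a direct counting argument to suffice, with no appeal to Lemma~\ref{lem:alon}. Specifically, I would let the referee choose $T$ uniformly at random among the $\binom{k}{2}$ two-element subsets of $[k]$. The first step is to fix the player's random bits, which turns its behavior into a deterministic sequence of proposals $P_1, P_2, \ldots$; this is legitimate because the only feedback the player ever receives before winning is ``your proposal failed,'' so the proposals it issues up to its first win are a function of its coins alone and are in particular independent of $T$.

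The heart of the argument is a separation (or ``signature'') observation. For a fixed proposal sequence and a horizon of $m$ rounds, assign to each element $x\in[k]$ the vector $s(x)=(\,[x\in P_1],\ldots,[x\in P_m]\,)\in\{0,1\}^m$ recording whether $x$ lies in each proposal. Against target $T=\{a,b\}$, a single proposal $P_i$ hits $T$ exactly when $a$ and $b$ fall on opposite sides of $P_i$, so the player wins within $m$ rounds if and only if $s(a)\neq s(b)$. Hence the ``bad'' pairs on which the player fails are precisely those lying in a common signature class. Grouping the $k$ elements by signature into classes $C_v$ indexed by $v\in\{0,1\}^m$, the number of bad pairs is $\sum_v\binom{|C_v|}{2}=\tfrac12\big(\sum_v|C_v|^2-k\big)$, and since there are at most $2^m$ classes, Cauchy--Schwarz gives $\sum_v|C_v|^2\ge k^2/2^m$, so the number of bad pairs is at least $\tfrac12\big(k^2/2^m-k\big)$.

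To finish, I would choose the threshold cleanly: if $2^m\le k/2$ then the bound above yields at least $k/2$ bad pairs, so the conditional failure probability for this fixing of the coins is at least $(k/2)/\binom{k}{2}=1/(k-1)>1/k$. Since this holds for \emph{every} fixing of the player's coins, it holds in expectation over the coins as well, so any player running for $m\le\log_2 k - 1$ rounds fails on a uniformly random size-two target with probability exceeding $1/k$---contradicting the assumed $1-\tfrac1k$ success guarantee. Therefore $f(k)>\log_2 k-1\in\Omega(\log k)$. I expect the one subtle point, rather than any calculation, to be the justification in the first paragraph that the proposals may be treated as independent of $T$; everything downstream is elementary convexity together with the careful choice of the cutoff $2^m\approx k/2$.
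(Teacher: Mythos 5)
Your proof is correct, and its core device---recording, for each element of $[k]$, the binary ``signature'' of its membership in the proposals $P_1,\dots,P_m$, and observing that a pair $\{a,b\}$ is hit within $m$ rounds exactly when $s(a)\neq s(b)$---is precisely the device the paper's own proof uses. Where you diverge is in how this is converted into a probability bound. The paper applies the pigeonhole principle to conclude only that \emph{some} pair of $[k]$ shares a signature when $m<\log k$, and then feeds the family $\mathscr{R}_2$ of all $\binom{k}{2}$ pairs back into the template of Theorem~\ref{thm:hitting:high}; since that template yields a failure probability of only one over the size of the un-hit family, and $\binom{k}{2}\gg k$, making the final comparison with $1/k$ go through requires rerunning the rescaling step of Theorem~\ref{thm:hitting:high} with a ground set of size roughly $\sqrt{2k}$. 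Your Cauchy--Schwarz count sidesteps this entirely: by showing that at least $k/2$ of the $\binom{k}{2}$ pairs collide whenever $2^m\le k/2$, you obtain a failure probability of at least $1/(k-1)>1/k$ directly against the uniform distribution on pairs of the full ground set $[k]$, with explicit constants and no appeal to the earlier theorem's machinery. Your flagged subtlety---that once the coins are fixed the proposal sequence up to the first win is deterministic and independent of $T$, because the only feedback before winning is ``failed''---is exactly the independence the paper also relies on, and your handling of it is sound. In short: same key idea, but your quantitative counting step is cleaner and slightly stronger than the paper's existence-only pigeonhole argument.
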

\begin{proof}
Our proof strategy is to prove a variant of the first result of Lemma~\ref{lem:alon} that will allow
us to reuse the proof argument of Theorem~\ref{thm:hitting:high} to prove our needed result for
the restricted case.
To do so, 
fix any $k>1$.
Consider the $k$-family $\R_2$
that consists of the ${k \choose 2}$ unique pairs in $[k] \times [k]$.
Fix some $k$-family $\H$ of size $t < \log{k}$.
We now show the existence of some $R\in \R_2$ not hit by $\H$.
To do so, we first define a function $f_{\H}: [k] \rightarrow \{0,1\}^{t}$,
where $f_{\H}(i)$ returns a binary string $b_i$ of length $t$,
where bit $r$ of $b_i$ is $1$ if and only if $i$ is in the $r^{th}$ set in $\H$,
by some fixed ordering of these sets.
Given our assumption that $t < \log{k}$, 
it follows that the total number of unique binary strings of length $t$
can be upper bounded as 
$2^{t} < 2^{\log{k}} = k$.

The pigeonhole principle tells us that there exist $i,j\in [k], i \neq j$,
such that $f_{\H}(i) = f_{\H}(j)$. 
It follows that the set $\{i,j\} \in \R_2$ is not hit by $\H$,
as we just established that there is no $H\in \H$ that contains $i$ or $j$,
but not both.
At this point, we have established the existence of a $k$-family made up
only of sets of size two ($\R_2$) such that any $k$-family $\H$ that
hits this family must be of size at least $\log{k}$.
We can consider this a variant of the first result of Lemma~\ref{lem:alon},
and therefore achieve our bound by now applying
the same proof argument as in Theorem~\ref{thm:hitting:high} to this variant of the result.
This argument provides that $\Omega(\log{k})$ rounds are required to solve the restricted
hitting game with probability at least $1-\frac{1}{k}$, as required by the theorem.
%
%
%
\end{proof}

\section{Simulation Strategy}
\label{sec:sim}

Most of our bounds for the {\em wake-up} problem use a similar
simulation strategy. To reduce redundancy, we define
the basics of the strategy and its accompanying notation in its own section.
In more detail, the {\em wake-up simulation strategy}, defined with respect
to a wake-up algorithm ${\cal A}$, is a general strategy for
 a $k$-hitting game player to generate proposals based on a local simulation of ${\cal A}$.
The strategy works as follows. The player simulates ${\cal A}$ running on all $k$ nodes
in a $k$-node network satisfying the same assumptions on collision detection and channels assumed by ${\cal A}$.
For each simulated round, the player will generate one or more proposals for the hitting game.
In more detail, at the beginning of a new simulated round, the player simulates the $k$ nodes
running ${\cal A}$ up until the point that they make a broadcast decision.
At this point, the player applies a {\em proposal rule} that transforms these decisions into one or more
proposals for the hitting game.
The player then makes these proposals, one by one, in the game.
If none of these proposals wins the hitting game, then the player most complete the current simulated
round by using a {\em receive rule} to specify what each node receives;
 i.e., silence, a message, or a collision (if collision detection is assumed by ${\cal A}$).
In other words, a given application of the wake-up simulation strategy is defined by two things:
a definition of the {\em proposal rule} and {\em receive rule} used by the player to generate proposals from the simulation,
and specify receive behavior in the simulation, respectively.

To analyze a wake-up simulation strategy for a given instance of the $k$-hitting game with target set $T$,
we define the {\em target execution} for this execution to be the execution that would
result if ${\cal A}$ was run in a network where only the nodes corresponding to $T$ were active and they used
the same random bits as the player uses on their behalf in the simulation.
We say an instance of the simulation strategy is {\em consistent} with its target execution through a given round,
if the nodes corresponding to $T$ in the simulation behave the same (e.g., send and receive the same messages)
as the corresponding nodes in the target execution through this round.

\section{Lower Bounds for Wake-Up}
\label{sec:wakeup} 

We begin by proving tight lower bounds for both expected and high probability solutions
to the wake-up problem in the most standard
set of assumptions used with the radio network model: a single channel and no collision detection.
As explained below, our bounds are tight and generalize the best know previous bounds,
which hold only for uniform algorithms, to now apply to all randomized algorithms.
(We note that a preliminary version of our high probability bound below appeared
as an aside in our previous work on structuring multichannel radio networks~\cite{daum:2012}).

In terms of related work, the {\em decay} strategy introduced Bar-Yehuda et~al.~\cite{baryehuda:1992} solves the
wake-up problem in this setting with high probability in $O(\log^2{n})$ rounds and in expectation in $O(\log{n})$ rounds.
In 2002, Jurdzinski and Stachowiak~\cite{jurdzinski:2002}
proved the necessity of $\Omega\big(  \frac{\log{n}\log{(1/\epsilon)}}{\log\log{n} + \log\log{(1/\epsilon)}} \big)$
rounds to solve wake-up with probability at least $1-\epsilon$, which proves decay optimal
within a $\log\log{n}$ factor.
Four years later, Farach-Colton et al.~\cite{colton:2006} removed the $\log\log{n}$ factor
by applying linear programming techniques.
As mentioned, these existing bounds only apply to {uniform} algorithms in which nodes 
use a predetermined sequence of broadcast probabilities. (Section $3.1$ of~\cite{colton:2006}
claims to extend their result to a slightly more general class of uniform algorithms
in which a node can choose a uniform algorithm to run based on its unique id.)


\begin{theorem}
Let ${\cal A}$ be an algorithm that solves wake-up with high probability in $f(n)$ rounds
in the radio network model with a single channel and no collision detection.  It follows
that $f(n) \in \Omega(\log^2{n})$.
\label{thm:wakeup:high}
\end{theorem}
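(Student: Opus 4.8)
The plan is to reduce the $k$-hitting game to wake-up and then invoke Theorem~\ref{thm:hitting:high}. Assume ${\cal A}$ solves wake-up with high probability in $f(n)$ rounds. I will build a $k$-hitting game player ${\cal P}$ that uses the wake-up simulation strategy of Section~\ref{sec:sim} to win the game in $f(k)$ rounds with probability at least $1-\frac{1}{k}$; the hitting-game lower bound then forces $f(k)\in\Omega(\log^2 k)$, which is exactly the claimed bound after identifying $k$ with $n$.

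First I would fix the two rules that complete the strategy. For the \emph{proposal rule}, in each simulated round ${\cal P}$ proposes the single set $P$ consisting of the ids of all $k$ nodes that choose to broadcast in that round. For the \emph{receive rule}, ${\cal P}$ simply has every node receive nothing (silence). Note that ${\cal P}$ never consults the hidden target $T$, as required, and that it produces exactly one proposal per simulated round, so $f(k)$ simulated rounds yield at most $f(k)$ proposals in the game.

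The heart of the argument is a consistency claim: the simulation remains consistent with the target execution (in which only the nodes of $T$ are active, driven by the same random bits) up to and including the round that solves wake-up. I would prove this by induction on the round, and the key observation here is the step I expect to be the crux: without collision detection a receiving node cannot distinguish silence from a collision. In the target execution, up until wake-up is solved it is never the case that exactly one node of $T$ broadcasts while the rest are silent; hence every receiving node of $T$ hears either silence (no broadcaster) or an indistinguishable collision (two or more broadcasters), that is, it hears ``nothing.'' This is precisely what ${\cal P}$'s receive rule delivers, so the nodes of $T$, fed identical random bits, make identical broadcast decisions in the simulation and in the target execution.

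Finally I would cash in consistency at the winning round. Since ${\cal A}$ solves wake-up on the active set $T$ within $f(k)$ rounds with probability at least $1-\frac{1}{k}$ (the underlying network has $k$ nodes), with that probability there is a round $r\le f(k)$ in which exactly one node of $T$ broadcasts alone among the active nodes. By consistency, that same node broadcasts in round $r$ of the simulation while the other nodes of $T$ stay silent; although nodes outside $T$ may also broadcast in that round, the proposal $P$ of all broadcasters therefore satisfies $|P\cap T|=1$, so ${\cal P}$ wins. Thus ${\cal P}$ wins the $k$-hitting game within $f(k)$ rounds with probability at least $1-\frac{1}{k}$, and Theorem~\ref{thm:hitting:high} yields $f(k)\in\Omega(\log^2 k)$, establishing $f(n)\in\Omega(\log^2 n)$ as required.
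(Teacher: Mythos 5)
Your proposal is correct and follows essentially the same route as the paper's own proof: the identical proposal rule (all broadcasters' ids) and receive rule (everyone hears nothing), the same consistency argument resting on the fact that before wake-up is solved every node of $T$ hears silence or an indistinguishable collision, and the same appeal to Theorem~\ref{thm:hitting:high}. The only cosmetic difference is that you conclude directly rather than via the paper's explicit contradiction argument.
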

\begin{proof}
Fix some wake-up algorithm ${\cal A}$ that solves wake-up in $f(n)$ rounds
with high probability in a network with one channel and no collision detection.
We start by defining a wake-up simulation strategy that uses ${\cal A}$ (see Section~\ref{sec:sim}).
In particular,
consider the {\em proposal rule} that has the player propose the id of every node that broadcasts
in the current simulated round,
and the {\em receive rule} that always has all nodes receive nothing.

Let ${\cal P}_{\cal A}$ be the $k$-hitting game player that uses this simulation strategy.
We argue that ${\cal P}_{\cal A}$ solves the $k$-hitting game in $f(k)$ rounds with high probability in $k$.
To see why, notice that for a given instance of the hitting game with target $T$,
${\cal P}_{\cal A}$ is consistent with the target execution until the receive rule of the first round
in which exactly one node in $T$ broadcasts.
(In all previous rounds, ${\cal P}_{\cal A}$ correctly simulates the nodes in $T$ receiving
nothing, as its receive rule has all nodes always receive nothing.)
Assume ${\cal A}$ solves wake-up in round $r$ in the target execution.
It follows that $r$ is the first round in which a node in $T$ broadcasts alone in this execution.
By our above assumption, ${\cal P}_{\cal A}$ is consistent with the target execution up to the application
of the receive rule in $r$. In particular, it is consistent when it applies the proposal rule
for simulated round $r$. By assumption, this proposal will include exactly one node from $T$---winning the hitting game.

We assumed that ${\cal A}$ solves wake-up in $f(n)$ rounds with high probability in $n$.
Combined with our above argument, it follows
that ${\cal P}_{\cal A}$ solves the $k$-hitting game in $f(k)$ rounds with high probability in $k$.
To complete our lower bound, 
we apply a contradiction argument.
In particular, assume for contradiction that there exists a wake-up algorithm ${\cal A}$
that solves wake-up in $f(n) \in o(\log^2{n})$ rounds, with high probability.
The hitting game player ${\cal P}_{\cal A}$ defined above
will therefore solve $k$-hitting in $o(\log^2{n})$ rounds with high probability.
This contradicts Theorem~\ref{thm:hitting:high}.
\end{proof}

\begin{theorem}
Let ${\cal A}$ be an algorithm that solves wake-up in $f(n)$ rounds, in expectation,
in the radio network model with a single channel and no collision detection.  It follows
that $f(n) \in \Omega(\log{n})$.
\label{thm:wakeup:expectation}
\end{theorem}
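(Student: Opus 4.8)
The plan is to reuse, essentially verbatim, the simulation strategy and consistency argument from Theorem~\ref{thm:wakeup:high}, swapping the high-probability hitting-game bound for its expected-time counterpart. First I would fix a wake-up algorithm ${\cal A}$ solving the problem in $f(n)$ rounds in expectation on a single channel with no collision detection, and instantiate the wake-up simulation strategy (Section~\ref{sec:sim}) with the same two rules as before: the \emph{proposal rule} has the player propose the id of every node broadcasting in the current simulated round, and the \emph{receive rule} always has every node receive nothing. Call the resulting $k$-hitting player ${\cal P}_{\cal A}$.

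The heart of the argument is the same consistency observation. For any target set $T$, the receive rule keeps the simulation consistent with the target execution (the run of ${\cal A}$ in which exactly the nodes of $T$ are active, using the same random bits the player uses on their behalf) up through the proposal phase of the first round $r$ in which a node of $T$ broadcasts alone: until that point every node, and in particular every node of $T$, receives nothing in both the simulation and the target execution. When ${\cal P}_{\cal A}$ applies the proposal rule in round $r$, the nodes of $T$ that broadcast in the simulation are exactly those that broadcast in round $r$ of the target execution, namely the single node that wakes everyone up; hence the proposal $P$ satisfies $|P\cap T|=1$ and ${\cal P}_{\cal A}$ wins in round $r$.

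The point I would make explicit for the expectation bound is that this matching is \emph{per-instance}: for every fixing of $T$ and of the random bits, ${\cal P}_{\cal A}$ wins in precisely the round $r$ in which ${\cal A}$ solves wake-up in the corresponding target execution. Consequently, for every target set $T$, the expected number of rounds ${\cal P}_{\cal A}$ uses equals the expected wake-up time of ${\cal A}$ when exactly the nodes of $T$ are active, which is at most $f(k)$ by assumption. Since this holds for every $T$, it holds against every referee, so ${\cal P}_{\cal A}$ solves the $k$-hitting game in $f(k)$ rounds in expectation.

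To finish, I would apply a contradiction argument against Theorem~\ref{thm:hitting:expectation}: if $f(n)\in o(\log n)$, then ${\cal P}_{\cal A}$ would solve $k$-hitting in $o(\log k)$ rounds in expectation, contradicting the $\Omega(\log k)$ bound proved there; hence $f(n)\in\Omega(\log n)$. I expect no genuine obstacle here, since the construction is inherited wholesale from Theorem~\ref{thm:wakeup:high}. The one place warranting care, and the only substantive difference from the high-probability version, is verifying that the round-matching is an \emph{exact equality} of the winning round and the wake-up round rather than merely a success-probability statement, so that the expectation transfers cleanly to the hitting player.
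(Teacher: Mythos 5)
Your proposal is correct and matches the paper's approach exactly: the paper's own proof of this theorem is a one-line ``Proof Idea'' that reuses the simulation and consistency argument of Theorem~\ref{thm:wakeup:high} and swaps the contradiction target to Theorem~\ref{thm:hitting:expectation}. Your added observation that the player's winning round coincides per-instance with the wake-up round, so that the expectation transfers exactly, is a correct and worthwhile elaboration of what the paper leaves implicit.
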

\begin{proof}[Proof Idea.]
It is sufficient to apply the same argument as in Theorem~\ref{thm:wakeup:high}.
The only change is in the final contradiction argument, where we simply replace $\log^2{n}$ with $\log{n}$,
and now contradict Theorem~\ref{thm:hitting:expectation}.
\end{proof}


\section{Lower Bounds for Wake-Up with Collision Detection}
\label{sec:wakeup:cd}

We prove tight lower bounds for expected and high probability bounds on the wake-up problem in the radio network model 
with collision detection.
%
In terms of related work,
a seminal paper by Willard~\cite{willard:1986}
describes a wake-up algorithm (he called the problem
``selection resolution," but the definition in this setting is functionally identical) which solves
the problem in $O(\log\log{n})$ rounds, in expectation. He also proved the result tight
with an $\Omega(\log\log{n})$ lower bound for uniform algorithms.
As Willard himself admits, his lower bound proof is mathematically complex.
Below, we significantly simplify this bound and generalize it to hold for all algorithms. 
From a high-probability perspective, many solutions exist in folklore for solving wake-up (and related problems)
in $O(\log{n})$ rounds. Indeed, leveraging collision detection,
wake-up can be solved {\em deterministically} in $O(\log{n})$ rounds (e.g., use
the detector to allow the active nodes to move consistently through a binary search
tree to identify the smallest active id).
The necessity of $\Omega(\log{n})$ rounds seems also to exist in folklore.

We begin with our high probability result.
Our simulation strategy is more difficult to deploy here because the player
must now somehow correctly simulate the collision detection among the nodes in
the (unknown) target set $T$. 
To overcome this difficulty,
we apply our solution to networks in which only two nodes are activated and then achieve a contradiction
with our lower bound on {\em restricted} hitting. 

\begin{theorem}
Let ${\cal A}$ be an algorithm that solves wake-up with high probability in $f(n)$ rounds
in the radio network model with a single channel and collision detection.  It follows
that $f(n) \in \Omega(\log{n})$.
\label{thm:wakeup:cd:high}
\end{theorem}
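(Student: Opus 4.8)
The plan is to reuse the wake-up simulation strategy of Section~\ref{sec:sim}, but to reduce from the \emph{restricted} $k$-hitting game (target set of size exactly two) rather than the general game, precisely so that I can sidestep the difficulty of simulating collision detection without knowing $T$. First I would fix a wake-up algorithm ${\cal A}$ for the single-channel, collision-detection model and define a player ${\cal P}_{\cal A}$ by the following two rules. The \emph{proposal rule} proposes the set of ids of all nodes that broadcast in the current simulated round. The \emph{receive rule} is purely \emph{local}: every node that broadcast is told ``collision,'' and every node that listened is told ``silence.'' The crucial feature is that each node's simulated feedback depends only on its own broadcast decision, so ${\cal P}_{\cal A}$ can apply the rule with no knowledge whatsoever of $T$.

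The heart of the argument is a consistency claim for a target set $T=\{i,j\}$ of size two. In the target execution only $i$ and $j$ are active, and by definition the solving round $r$ is the first round in which one of them broadcasts alone---that is, the first round in which \emph{exactly one} of the two broadcasts. Hence in every round before $r$ the number of target broadcasters is $0$ or $2$. If it is $2$, both target nodes transmit simultaneously with another node, which under the single-hop transmitter-side detection assumption of Section~\ref{sec:model} they perceive as a collision; if it is $0$, both listen and both hear silence (note that a target node receiving an actual message would require the other to broadcast alone, i.e.\ would itself be the solving round). In either pre-solving case the local receive rule reproduces exactly this feedback for $i$ and $j$. I would then argue by induction that ${\cal P}_{\cal A}$ is consistent with the target execution through the receive rule of round $r-1$: the two target nodes enter each round in the same state as in the target execution and receive the same feedback, \emph{regardless of what the other $k-2$ simulated nodes do}. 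The receive rule may well be wrong for non-target nodes (a non-target node broadcasting alone is still told ``collision''), but this never matters, since only the target nodes' behavior is relevant.

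It follows that in round $r$ the two target nodes make the same decisions as in the target execution, so exactly one of $i,j$ broadcasts while the other listens. The proposal rule then proposes the set $P$ of all simulated broadcasters; since $P$ contains the one target broadcaster but not the target listener, $|P\cap T|=1$ and ${\cal P}_{\cal A}$ wins \emph{before} the receive rule of round $r$ is ever applied (so the fact that the target broadcaster is not ``alone'' in the full simulation is irrelevant). Because ${\cal A}$ solves wake-up by round $f(k)$ with probability at least $1-1/k$ when exactly nodes $i,j$ are activated in a $k$-node network, and the target execution is coupled to the simulation through shared random bits (per Section~\ref{sec:sim}), ${\cal P}_{\cal A}$ solves the restricted $k$-hitting game by round $f(k)$ with probability at least $1-1/k$. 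Invoking Theorem~\ref{thm:hitting:restricted} gives $f(k)\in\Omega(\log k)$, and hence $f(n)\in\Omega(\log n)$, as required.

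The main obstacle, as flagged before the statement, is producing a receive rule that simulates collision detection correctly for the target nodes despite the player's ignorance of $T$. Restricting to $|T|=2$ is exactly what renders the purely local rule ``broadcasters collide, listeners hear silence'' correct up to the solving round: with two active nodes the only pre-solving possibilities are ``both broadcast'' and ``both listen,'' and the local rule handles both faithfully. Identifying this two-node restriction and matching it to the restricted hitting lower bound is the one nonobvious step; everything else follows the template of Theorem~\ref{thm:wakeup:high}, with $\log^2 n$ replaced by $\log n$ and the appeal to Theorem~\ref{thm:hitting:high} replaced by one to Theorem~\ref{thm:hitting:restricted}.
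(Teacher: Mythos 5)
Your proposal is correct and follows essentially the same route as the paper: the same local receive rule (broadcasters detect collision, listeners hear silence), the same three-case consistency argument for a two-element target set, and the same appeal to the restricted $k$-hitting lower bound of Theorem~\ref{thm:hitting:restricted}. Your explicit invocation of the transmitter-side collision-detection assumption from Section~\ref{sec:model} is a nice touch that the paper leaves implicit.
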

\begin{proof}
Fix some wake-up algorithm ${\cal A}$ that solves wake-up in $f(n)$ rounds
with high probability in a network with one channel and collision detection.
We start by defining a wake-up simulation strategy that uses ${\cal A}$ (see Section~\ref{sec:sim}).
In particular,
consider the {\em proposal rule} that has the player propose the id of every node that broadcasts
in the current simulated round,
and a {\em receive rule} that has two cases:
(1) if a given player broadcast in the current simulated round, it is simulated as detecting a collision;
(2) if a given player did not broadcast in the current simulated round, it is simulated as receiving and detecting nothing.

Let ${\cal P}_{\cal A}$ be the restricted $k$-hitting game player that uses this simulation strategy.
We cannot argue that this player solves the general $k$-hitting game, 
as the receive rule above is not likely to be consistent for many target sets.
We instead argue
that ${\cal P}_{\cal A}$ solves {\em restricted} $k$-hitting in $f(k)$ rounds with high probability in $k$.
In other words, our receive rule above, we will show, keeps the simulation consistent when the target only contains two nodes 
(as is the case in restricted hitting).
In more detail, 
fix a given instance of the restricted $k$-hitting game with some target set $T=\{i,j\}$.
We argue that
${\cal P}_{\cal A}$ is consistent with the target execution until it applies the receive rule in the first round
in which a node in $T$ broadcasts alone (at which point, the player will have won the hitting game).
In particular, there are three cases relevant to the receive behavior in a given round of the target execution for $T$.
The first case is that $i$ and $j$ are both silent. In this case, they would both receive and detect
nothing in the target execution. By definition, they will both receive nothing in ${\cal P}_{\cal A}$'s simulation as well.
The second case is that both $i$ and $j$ broadcast. In this case, they would both correctly detect a collision in the target
execution. By definition, the same occurs in the simulation.
The third case has exactly one of the two nodes broadcasting.
In this case, the player wins the hitting game during the proposal rule of this simulated round, so we do not
have to care about applying the receive rule in a way that maintains consistency.

We assumed that ${\cal A}$ solves wake-up in $f(n)$ rounds with high probability in $n$.
Clearly, this bound still holds even if we restrict our attention to networks with only two nodes activated.
Combined with our above argument, therefore, it follows
that ${\cal P}_{\cal A}$ solves the restricted $k$-hitting game in $f(k)$ rounds with high probability in $k$.
Assume for contradiction that $f(n) \in o(\log{n})$.
It would follow that ${\cal P}_{\cal A}$ solves the restricted hitting game  in $o(\log{k})$ rounds with high probability.
This contradicts Theorem~\ref{thm:hitting:restricted}.
\end{proof}

We now simplify and strengthen Willard's bound of $\Omega(\log\log{n})$ rounds for expected time wake up. %
At the core of our result is a pleasingly simple but surprisingly useful observation:
if you can solve wake-up in $t$ rounds with collision detection, you can then use this strategy 
to solve the hitting game in $2^t$ rounds
by simulating (carefully) all possible sequences of outcomes for the collision detector behavior in a $t$ round execution.
Solving the problem in $o(\log\log{n})$ rounds (in expectation) with collision detection,
therefore,
yields a hitting game solution that requires
only $2^{o(\log\log{k})} = o(\log{k})$ rounds (in expectation), contradicting Theorem~\ref{thm:hitting:expectation}---our lower bound
on expected time solutions to the hitting game.

\begin{theorem}
Let ${\cal A}$ be an algorithm that solves wake-up in $f(n)$ rounds, in expectation,
in the radio network model with a single channel and collision detection.  It follows
that $f(n) \in \Omega(\log\log{n})$.
\label{thm:wakeup:cd:expectation}
\end{theorem}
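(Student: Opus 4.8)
The plan is to argue by contradiction: assume some ${\cal A}$ solves wake-up with collision detection in $f(n) \in o(\log\log n)$ expected rounds, and build from it a $k$-hitting player whose expected number of proposals is $o(\log k)$, contradicting Theorem~\ref{thm:hitting:expectation}. The reduction again uses the wake-up simulation strategy of Section~\ref{sec:sim}, but the player can no longer hard-code a single receive rule as in Theorem~\ref{thm:wakeup:cd:high}, since it does not know $T$ yet must now simulate collisions. The observation that rescues the reduction is this: in the target execution, as long as wake-up has not yet been solved, the collision-detection feedback delivered to the active nodes is both \emph{uniform} (identical across all of them) and \emph{binary}. Indeed, in any round before wake-up either no active node broadcasts, so every active node observes silence, or at least two do, so every active node---transmitting or not---observes a collision (here I use the model's assumption that a transmitter can tell whether it broadcast alone). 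The first round with exactly one active transmitter is precisely the round that solves wake-up. Hence the feedback history seen by the nodes of $T$ up to the solving round is a single binary string, and, crucially, this string does not depend on which nodes comprise $T$.

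This means the player can \emph{enumerate} the possible feedback strings rather than guess $T$. First I would have the player explore a binary tree of simulations of depth $t$, where each node at depth $i$ corresponds to a guess $\sigma \in \{0,1\}^{i}$ for the feedback in the first $i$ rounds ($0 =$ silence, $1 =$ collision): the player simulates all $k$ nodes running ${\cal A}$, feeds every node the uniform feedback dictated by $\sigma$ in each round, proposes the ids of all nodes broadcasting in each simulated round, and then branches on both possible feedbacks for the next round. Processing this tree breadth-first costs $\sum_{i=1}^{t} 2^{i-1} = O(2^{t})$ proposals. Now fix the referee's target $T$ and let $r$ be the round in which the target execution (running ${\cal A}$ on active set $T$ with the same random bits) solves wake-up. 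On the unique branch $\sigma^{*}$ matching the true feedback history, the nodes of $T$ receive exactly the feedback they would in the target execution, so by the consistency definition of Section~\ref{sec:sim} this branch stays consistent through round $r-1$; hence in simulated round $r$ of that branch exactly one node of $T$ broadcasts, the corresponding proposal $P$ satisfies $|P \cap T| = 1$, and the player wins---after at most $O(2^{r})$ proposals, provided $r \le t$.

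The remaining issue is purely about expectation: I cannot simply bound the expected hitting time by $2^{E[r]}$, since $x \mapsto 2^{x}$ is convex and Jensen points the wrong way. I would handle this by truncation and repetition. Set $t = \lceil 2f(k) \rceil$, so that by Markov's inequality the target execution solves wake-up within $t$ rounds with probability at least $1/2$, regardless of $T$. Each depth-$t$ pass of the tree above therefore wins with probability at least $1/2$ using $O(2^{t})$ proposals; if a pass fails, the player restarts the whole simulation with fresh random bits for ${\cal A}$. Since successive passes use independent bits and succeed independently with probability at least $1/2$, the expected number of passes is at most two, so the expected number of proposals is $O(2^{t}) = 2^{O(f(k))}$. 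Under the contradiction hypothesis $f(k) \in o(\log\log k)$, this is $2^{o(\log\log k)} = o(\log k)$, contradicting the $\Omega(\log k)$ expected-time lower bound of Theorem~\ref{thm:hitting:expectation} and completing the proof.

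I expect the main obstacle to be pinning down and justifying the uniform-binary-feedback observation, since it is exactly what collapses the space of possible collision-detection histories from something that a priori depends on $T$ down to a single length-$t$ bit string, thereby keeping the enumeration cost at $2^{O(t)}$ rather than exponential in $tk$. The truncation-and-restart step is then a routine device for converting the natural constant-probability reduction into the expected-time statement the theorem requires.
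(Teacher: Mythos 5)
Your proof is correct and follows essentially the same route as the paper: the key observation that pre-wake-up collision-detector feedback is uniform and binary, together with the enumeration of all $2^{O(t)}$ possible feedback strings via a binary simulation tree (one proposal per tree node, shared random bits across branches), is exactly the paper's argument for Theorem~\ref{thm:wakeup:cd:expectation}. The only place you diverge is the expectation step---the paper builds the tree to depth exactly $f(k)$ and directly asserts the player wins in $2^{f(k)+1}$ expected proposals, whereas your Markov truncation at $t=\lceil 2f(k)\rceil$ plus independent restarts is a more careful (and arguably necessary) justification of that conversion from expected wake-up time to expected hitting time.
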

 \begin{proof}
 Fix some algorithm ${\cal A}$ that solves wake-up in $f(n)$ rounds, in expectation, in this setting.
 We start by defining a player ${\cal P}_{\cal A}$ that simulates ${\cal A}$
 to solve $k$-hitting in no more than $2^{f(k)+1}$ rounds, in expectation.
 Our player will use a variant of the simulation strategy defined in Section~\ref{sec:sim} and used
 in the preceding proofs, and we will, therefore, adopt much of the terminology of this approach (with some minor modifications). 
 In more detail,
 in this variant, ${\cal P}_{\cal A}$ will 
 run a different fixed-length simulation of ${\cal A}$, starting from round $1$,
 to generate each of its guesses in the hitting game.
 Most of these simulations will {\em not} be consistent with the relevant target execution.
 We will show, however, that in the case that the target execution solves wake-up,
 at least one such simulation is consistent and will therefore win the game.

 In more detail,
 for a given $k$,
let $B_{f(k)}$ be a full rooted binary tree of depth $f(k)$.
 We define a tree node labeling $\ell$,
 such that for every non-root node $u$,
 $\ell(u) = 0$ if $u$ is a left child of its parent and $\ell(u) = 1$ if $u$ is a right child (by some consistent orientation).
 Let $d$ be the depth function (i.e., $d(u)$ is the depth of $u$ in the tree with $d(root)=0$).
 Finally, let $p(u)$ return the $d(u)$-bit binary string defined by the sequence of labels (by $\ell$)
 on the path that descends from the root to $u$ (including $u$).
 For example, if the path from the root to $u$ goes from the root to its right child $v$,
 then from $v$ to its left child $u$, $p(u) = 10$.

 Our player ${\cal P}_{\cal A}$, when playing the $k$-hitting game, generate one guess 
 for each node in $B_{f(k)}$.
 Fix some such node $u$.
 To generate a guess for $u$, the player first executes a $d(u)$-round
 simulation of ${\cal A}$, running on all $k$ nodes in a $k$-node network,
 using $p(u)$ to specify collision detector behavior (in a manner described below).
 After it simulates these $d(u)$ full rounds, 
 it then simulates just enough of round $d(u)+1$ to determine the simulated nodes' broadcast decisions in this round.
 The player proposes the id of the nodes that choose to broadcast in this final partial round.
 (When generating a guess for the root node, the player simply proposes the nodes that broadcast in the first round.)
 
 In more detail, for each round $r\leq d(u)$ of the simulation
 for tree node $u$, if the $r^{th}$ bit of $p(u)$ is $0$,
 the player simulates all nodes detecting silence,
 and if the bit is $1$, it simulates all nodes detecting a collision.
 As a final technicality,
 let $\kappa$ be the random bits provided to the player 
 to resolve its random choices.
 We assume that for each simulated node $i$,
 the players uses the same bits from $\kappa$ for $i$ in each of its simulations.
 We do not, therefore, assume independence between different simulations.
 
 Consider the target execution of ${\cal A}$ for a given instance of the hitting game with target set $T$ and random bits $\kappa$.
 Assume that the target execution defined for these bits and target set solves wake-up in some round $r \leq f(k)$.
 Notice that in every round $r' <r$,
 there are only two possible behaviors: (1) no nodes broadcast (and all nodes therefore receive and detect nothing); and (2) two or
 more nodes broadcast (and all nodes therefore detect a collision). 
By definition, there exists a node $u$ in $B_{f(k)}$ such that $p(u)$ is a binary string of length $r-1$,
 where for each bit position $i$
in the string, $i=0$ if no nodes broadcast in that round of the target execution,
and $i=1$ if two or more nodes broadcast in that round of the target execution.
It follows that the first $r-1$ rounds of the simulation associated with tree node $u$ are consistent with the target execution.
Because exactly one node from $T$ broadcasts in round $r$ of the target execution,
and the $u$-simulation is consistent through round $r-1$,
then this same single node from $T$ will broadcast in the simulated beginning of round $r$.
The player's proposal associated with $u$ will therefore win the hitting game.

Pulling together the pieces, by assumption, 
the target execution for a given $T$ and $\kappa$ solves wake-up in $f(k)$ rounds, in expectation.
It follows that our player solves $k$-hitting with the same probability.
The number of guesses required to solve the problem in this case is bounded by
the number of nodes in $B_{f(k)}$ (as there is one guess per node),
which is $2^{f(k)+1}-1$.
We can now conclude with our standard style of contradiction argument.
Assume for contradiction that there exists an algorithm ${\cal A}$
that solves wake-up with a single channel and collision detection in $f(n) \in o(\log\log{n})$ rounds, in expectation.
It follows that ${\cal P}_{\cal A}$ wins the $k$-hitting game in $2^{f(k)+1} \in o(\log{k})$ rounds,
in expectation.
This contradicts Theorem~\ref{thm:hitting:expectation}.
 \end{proof}

\section{Lower Bounds for Wake-Up with Multiple Channels}
\label{sec:wakeup:multi}

In recent years,
theoreticians have paid increasing attention to multichannel versions of the radio network model 
(e.g.,~\cite{dolev:2007,dolev:2008,gilbert:2009,dolev:2009,dolev:2011,daum:2012,daum:2012b}).
These investigations are motivated by the reality that most network cards allow the device to choose its channel
from among multiple available channels. From a theoretical perspective, the interesting question is how to
leverage the parallelism inherent in multiple channels to improve time complexity for basic communication problems.
Daum et~al.~\cite{daum:2012b}, building on results 
from Dolev et~al.~\cite{dolev:2009},
prove a lower bound
of $\Omega\big(\frac{\log^2{n}}{\chan\log\log{n}} + \log{n}\big)$ rounds
for solving wake-up with high probability and uniform algorithms in a network with $\chan$ channels.
A lower bound for expected-time solutions was left open.
The best known upper bound solves the problem in $O\big(\frac{\log^2{n}}{\chan} + \log{n}\big)$
rounds with high probability
 and in $O\big(\frac{\log{n}}{\chan} + 1\big)$ rounds in expectation~\cite{daum:2012b}.

In the theorems that follow, we prove new lower bounds that match the best known upper bounds.
These bounds close the $\log\log{n}$ gap that exists with the best known previous results,
 establish the first non-trivial expected time bound, and strengthen the results to hold for all algorithms.
 %
%
%
%
We begin with the high probability result. 
In this bound,
the $\log^2{n}/\chan$ term dominates when $\chan$ is small and the $\log{n}$ term dominates
when $\chan$ is large. We handle these cases separately in their own lemmas---each using a different
simulation strategy---then combine them to achieve our final theorem.


\begin{lemma}
Let ${\cal A}$ be an algorithm that solves wake-up with high probability in $f(n,\chan)$ rounds
in the radio network model with $\chan \geq 1$ channels.  It follows
that for every $\chan\geq 1$, $f(n,\chan) \in \Omega(\log^2{n}/\chan)$.
\label{lem:multi:1}
\end{lemma}
\begin{proof}
Assume for contradiction that there exists a
wake-up algorithm ${\cal A}$ that solves wake-up with high probability
in this setting in $o(\log^2{n}/\chan)$ rounds for some $\chan$.
We start by defining a wake-up simulation strategy (see Section~\ref{sec:sim})
that simulates ${\cal A}$ in a network with $\chan$ channels. 
In particular, consider the {\em proposal rule} that generate $\chan$ proposals
for each simulated round. In particular, for each channel $c\in [\chan]$,
it proposes the ids of the simulated nodes (if any) that broadcast on $c$ in the simulated round.
Assume the player uses the simple {\em receive rule} that has all simulated nodes receive nothing.

Let ${\cal P}_{\cal A, C}$ be the $k$-hitting game player that deploys this simulation strategy for our fixed value of $\chan$.
We argue that ${\cal P}_{\cal A, C}$ solves $k$-hitting in $f(k,\chan)\cdot\chan$ rounds with high probability.
To do so, we first argue that for a given instance of the hitting game with some target set $T$, 
the player is consistent with the target execution until it applies the receive rule in
the first round in which there is a channel on which some node from $T$ broadcasts alone,
as in all previous rounds and channels, there are either no broadcasters or two or more broadcasters from $T$: both
cases in which the receive rule behavior of receiving nothing is correct.
(This event does not necessarily imply that the player's simulation becomes {\em inconsistent}---for example,
if a node from $T$ broadcasts by itself on some channel other than $1$ with no other nodes from $T$ present to receive, the simulation is still consistent---but
it holds that before this event happens the simulation is definitely consistent.)

Next, assume ${\cal A}$ solves wake-up in round $r$ of this target execution.
This requires a node from $T$ to broadcast alone on channel $1$ in $r$.
It follows that in some round $r' \leq r$ in this target execution some
node from $T$ broadcasts alone on a channel for the first time.
As argued above, our simulation strategy is consistent with the target execution through $r'-1$.
Therefore, the simulation will make the same broadcast decisions for nodes in round $r'$
as in the target execution.
Let $c$ be the smallest channel with a single broadcaster from $T$ in $r'$ of the target execution.
When the player makes its proposal for this channel and this round,
it will win the hitting game.

We assumed that ${\cal A}$ solves wake-up in $f(n,\chan)$ rounds with high probability in $n$.
Combined with our above argument, it follows that our player ${\cal P}_{\cal A, C}$
solves $k$-hitting in $\leq f(k,\chan)\cdot\chan$ rounds with high probability in $k$. 
By our assumption, however, $f(k,\chan) \in o(\log^2{k}/\chan)$ for our fixed value of $\chan$,
which implies $f(k,\chan)\cdot \chan \in o(\log^2{k})$. This contradicts Theorem~\ref{thm:hitting:high}.
%
%
\end{proof}


\begin{lemma}
Let ${\cal A}$ be an algorithm that solves wake-up with high probability in $f(n,\chan)$ rounds
in the radio network model with $\chan \geq 1$ channels.  It follows
that for every $\chan \geq 1$, $f(n,\chan) \in \Omega(\log{n})$.
\label{lem:multi:2}
\end{lemma}
\begin{proof}
Assume for contradiction
that there exists a wake-up algorithm ${\cal A}$ that solves wake-up with high probability
in this setting in $o(\log{n})$ rounds for some $\chan$. 
We start by defining a wake-up simulation strategy (see Section~\ref{sec:sim})
that simulates ${\cal A}$ in network with $\chan$ channels.
In particular,
consider the {\em proposal rule} that generates up to two proposals
per simulated round:  the first proposal includes the ids of every node (if any) that broadcast in this simulated round (regardless
of their channel choice), while the second proposal includes only the ids of every node (if any) that broadcast {\em on channel $1$} during
this simulated round. 
Assume the player uses the simple {\em receive rule} that has all nodes always receive nothing.

Let ${\cal P}_{\cal A, C}$ be the $k$-hitting game player that uses this simulation strategy.
We cannot prove that this player generates a simulation consistent with the target execution
for all possible target sets.
For our purposes here, however,
we only need prove that the game is consistent in the special case where the size of the target set is always of size two
(clearly, an algorithm that works for all network sizes will work in the special case where the number of active nodes happens to be two).
We will then derive our contradiction with the lower bound on the {\em restricted} hitting game, which is sufficiently strong to achieve
our needed logarithmic result.

In more detail,
fix an an instance of the restricted hitting game with some target set $T=\{i,j\}$.
We call a round of the target execution {\em meaningful} if at least one of the two following
conditions holds: (1) exactly one node from $T$ broadcasts on channel $1$; (2) exactly one node from $T$ broadcasts.
Notice, these are not equivalent conditions. If, for example, $i$ broadcasts on channel $1$ and $j$ on
channel $2$, we satisfy the first property but not the second.
We first argue that the player is consistent with the target execution
until the receive rule is applied in the first meaningful round.
To do so, consider the different combinations of possible behavior for $i$ and $j$ in a non-meaningful round:
if $i$ and $j$ are both silent in a given round of the target execution, they both receive nothing
in the target execution and in the player's simulation;
if $i$ and $j$ both broadcast in the target execution, and it is not the case that exactly one of these two nodes broadcasts on channel $1$,
then both receive nothing and wake-up is not solved in the target execution as well as in the simulation.

Assume the target execution eventually generates a meaningful round. Call this round $r$.
The player wins the hitting game in round $r$.
This follows because we argued that the player is consistent with the target execution through $r-1$.
Therefore, it will simulate the same broadcast behavior in $r$ as in the target execution.
Regardless of which case in the definition of meaningful applies in $r$,
one of the proposals for this round will win the game.
Pulling together the pieces, we note that the algorithm ${\cal A}$ solves wake-up (and therefore generates a meaningful round) in this setting in $f(n,\chan)$ rounds, with high probability in $n$, for our fixed $\chan$.
Therefore, our player solves restricted $k$-hitting in no more than $2f(k,\chan)$ rounds, with high probability in $k$.
We assumed, however, that $f(n,\chan) \in o(\log{n})$.
It follows that our player solves the restricted hitting game in $\leq 2(k,\chan) \in o(\log{k})$ rounds with high probability in $k$.
This contradicts Theorem~\ref{thm:hitting:restricted}.
\end{proof}

 Our main theorem follows directly from Lemma~\ref{lem:multi:1} and~\ref{lem:multi:2}:

\begin{theorem}
Let ${\cal A}$ be an algorithm that solves wake-up with high probability in $f(n,\chan)$ rounds
in the radio network model with $\chan \geq 1$ channels.  It follows
that for every $\chan \geq 1$, $f(n,\chan) \in \Omega(\log^2{n}/\chan + \log{n})$.
\label{thm:wakeup:multi:high}
\end{theorem}

 Moving on to the expected case,
we prove the necessity of $\log{n}/\chan$ rounds
using the same technique as in Lemma~\ref{lem:multi:1}:

\begin{theorem}
Let ${\cal A}$ be an algorithm that solves wake-up in $f(n,\chan)$ rounds, in expectation,
in the radio network model with $\chan \geq 1$ channels.  It follows
that for every $\chan \geq 1$, $f(n,\chan) \in \Omega(\log{n}/\chan + 1)$.
\label{thm:wakeup:multi:expectation}
\end{theorem}
\begin{proof}[Proof Idea.]
We can apply the same
wake-up simulation strategy and
analysis as in Lemma~\ref{lem:multi:1}.
In this case, we are simply replacing $\log^2{n}$ with $\log{n}$,
and now deriving our contradiction with  Theorem~\ref{thm:hitting:expectation}.
 \end{proof}

 \section{Lower Bound for Wake-Up With Collision Detection and Multiple\\ Channels}
 \label{sec:wakeup:cd:multi}
 
The final combination of model parameters to consider for wake-up is collision detection
{\em and} multiple channels.
No non-trivial upper or  lower bounds 
are currently known for this case.
We rectify this omission by proving below that $\Omega(\log{n}/\log{\chan} + \log\log{n})$
rounds are necessary to solve this problem with high probability in this setting.
Notice, this bound represents an interesting split with the preceding multichannel
results (which assume no collision detection), as the speed-up is now logarithmic in $\chan$ instead of linear.
On the other hand, the $\log^2{n}$ term in the previous case is replaced here with a faster $\log{n}$ term.

Collision detection, in other words, seems to be powerful enough on its own that adding extra channels
does not yield much extra complexity gains.
We do not consider an expected time result for this setting.
This is because even {\em without} collision detection,
the best known upper bound for multichannel networks~\cite{daum:2012b}
 approaches $O(1)$ time (which is trivially optimal)
quickly as the number of channels increases.

\begin{theorem}
Let ${\cal A}$ be an algorithm that solves wake-up with high probability in $f(n,\chan)$ rounds
in the radio network model with $\chan \geq 1$ channels and collision detection.  It follows
that for every $\chan \geq 1$, $f(n,\chan) \in \Omega(\log{n}/\log{\chan} + \log\log{n})$.
\label{thm:wakeup:multicd:high}
\end{theorem}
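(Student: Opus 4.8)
The bound is a sum of two terms, and since $\Omega(a+b)=\Omega(\max\{a,b\})$ for non-negative $a,b$, the plan is to prove it as two separate lower bounds---one that dominates when $\chan$ is small and one that provides a floor holding for every $\chan$---and then combine them. Concretely, I would establish (i) $f(n,\chan)\in\Omega(\log n/\log\chan)$ and (ii) $f(n,\chan)\in\Omega(\log\log n)$, each by a reduction in the style of Section~\ref{sec:sim}, and observe that their maximum is $\Theta(\log n/\log\chan+\log\log n)$.

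For part (i) I would reduce from the \emph{restricted} $k$-hitting game (target size two), so that the collision-detector feedback stays simple, exactly as in Theorem~\ref{thm:wakeup:cd:high}. The player simulates ${\cal A}$ on $k$ nodes with $\chan$ channels and collision detection, using the same receive rule as in that proof: a simulated node that broadcasts is told it detects a collision, and a simulated node that listens is told it detects silence. The new ingredient is a proposal rule that must locate a lone broadcaster across $\chan$ channels using only $O(\log\chan)$ proposals rather than one per channel. To do this I would, for each bit position $b\in\{1,\dots,\lceil\log\chan\rceil\}$ and each value $v\in\{0,1\}$, propose the ids of all simulated nodes that broadcast on some channel whose index has bit $b$ equal to $v$; this is $2\lceil\log\chan\rceil$ proposals per simulated round. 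For a target $T=\{i,j\}$, consistency holds until the first round in which some node of $T$ broadcasts alone on any channel: before that round the two nodes are either both listening (both correctly told silence) or both broadcasting on a common channel (both correctly told collision), which are exactly the two cases the receive rule handles. In the round where a lone broadcaster from $T$ first appears, its channel index differs from the other node's channel (or the other node is silent), hence differs in some bit $b^\ast$, and the proposal $P_{b^\ast}^{v}$ for the appropriate $v$ contains exactly one element of $T$, winning before the receive rule is even applied. Since ${\cal A}$ still solves wake-up with probability $1-1/k$ within $f(k,\chan)$ rounds when only two nodes are active, the player wins restricted $k$-hitting within $2\lceil\log\chan\rceil\cdot f(k,\chan)$ proposals with probability $1-1/k$; Theorem~\ref{thm:hitting:restricted} then forces $2\lceil\log\chan\rceil\cdot f(k,\chan)\in\Omega(\log k)$, i.e.\ $f(k,\chan)\in\Omega(\log k/\log\chan)$.

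For part (ii) I would generalize the binary-tree simulation of Theorem~\ref{thm:wakeup:cd:expectation} (Willard's $\log\log$ bound) and pair it with the high-probability hitting bound rather than the expected one. As there, the player does not maintain a single consistent simulation; instead it runs one fixed-length prefix of ${\cal A}$ per node of a bounded-depth tree, where the labels along a root-to-node path encode a guessed sequence of per-round collision-detector outcomes, and at each tree node it emits proposals binary-encoded across channels (as in part (i)) so that a lone broadcaster on any channel is caught. If the target execution solves wake-up in round $r$, then its collision history through round $r-1$ is realized by some path, the simulation along that path agrees with the target execution through round $r-1$, and the proposals emitted there win. Bounding the number of tree nodes by $2^{O(f(k,\chan))}$ and the proposals per node by $O(\log\chan)$, a high-probability solution in $o(\log\log k)$ rounds would yield a $k$-hitting solution using $2^{o(\log\log k)}\cdot O(\log\chan)=o(\log^2 k)$ proposals (for $\chan$ at most polynomial in $k$), contradicting Theorem~\ref{thm:hitting:high}.

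The step I expect to be the main obstacle is controlling the branching factor of this tree. With a single channel each round has only two relevant global outcomes, silence or collision, giving branching two; but with $\chan$ channels the per-round feedback is a priori a \emph{per-channel} silence/collision pattern, and naively enumerating all such patterns makes the branching grow with $\chan$---which would degrade the bound to $\Omega(\log\log n/\log\chan)$ and leave it strictly dominated by part (i). The crux is therefore to keep the branching independent of $\chan$. I expect this to follow from the observation that, because the proposal rule already wins the instant any target node broadcasts alone on any channel, every feedback the simulation must reproduce \emph{before} winning is plain silence-or-collision; restricting the enumerated histories to the outcomes that can actually arise before a lone broadcaster appears should bound the branching by a constant, with the only $\chan$-dependence pushed into the $O(\log\chan)$ proposals per node, which the slack in Theorem~\ref{thm:hitting:high} absorbs. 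Combining (i) and (ii) then yields $f(n,\chan)\in\Omega(\log n/\log\chan+\log\log n)$, as claimed.
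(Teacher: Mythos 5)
Your overall decomposition (prove the two terms separately and take the max) matches the paper, and your part (i) is correct, though it takes a more direct route: you build the hitting-game player straight from the multichannel algorithm, locating a lone broadcaster among the $\chan$ channels with $2\lceil\log\chan\rceil$ bit-indexed proposals per simulated round. The paper instead first converts ${\cal A}$ into a \emph{single-channel} collision-detection algorithm ${\cal A'}$ with a multiplicative $\lceil\log\chan\rceil+1$ slowdown (two activated nodes announce their channel choices bit by bit over the single channel to decide whether to simulate a collision), and only then reduces to restricted hitting. Both arguments restrict to target sets of size two for the same reason --- it is the only way to keep the receive rule consistent under collision detection --- and both land on Theorem~\ref{thm:hitting:restricted}. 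Your one-step reduction is arguably cleaner and buys nothing worse; this part is fine.

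Part (ii) has a genuine gap, and it is exactly at the step you flagged as the ``main obstacle.'' You run the tree simulation on all $k$ nodes and aim to contradict the \emph{general} high-probability bound (Theorem~\ref{thm:hitting:high}). Your proposed fix for the branching factor --- that every feedback arising before a win is plain silence-or-collision --- only bounds the per-node feedback \emph{alphabet}; it does not bound the number of distinct feedback \emph{assignments} across nodes, which is what the tree must branch over. With an arbitrary target set $T$ spread over $\chan$ channels, two target nodes colliding on channel $3$ must be told ``collision'' in the same round that a third target node listening on channel $5$ must be told ``silence,'' so the per-round outcome the simulation must guess is a vector over the simulated nodes, and the number of realizable vectors grows with $|T|$ and $\chan$, not with a constant. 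Your $2^{O(f)}$ node count therefore does not hold. The paper avoids this entirely by proving the $\log\log{n}$ term \emph{also} against the restricted game: with $|T|=2$ the joint feedback has only constantly many values per round (both silent, both colliding on a common channel, or each alone on distinct channels), so the tree of Theorem~\ref{thm:wakeup:cd:expectation} has constant branching and $2^{O(f)}$ leaves, and $2^{O(f)}\in\Omega(\log{k})$ from Theorem~\ref{thm:hitting:restricted} gives $f\in\Omega(\log\log{k})$ directly. Note also that the paper's tree resolves the ``same channel or not?'' question by branching on it rather than by probing channel bits, so no $O(\log\chan)$ factor appears per tree node and no assumption that $\chan$ is polynomial in $k$ is needed; your version carries both of these extra costs. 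Restricting your part (ii) to the restricted game and dropping the per-node channel probing in favor of branching recovers the paper's argument.
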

\begin{proof}
Assume for contradiction
that there exists a wake-up algorithm ${\cal A}$ for this setting
that solves the problem $o(\log{n}/\log{\chan} + \log\log{n})$ with high 
probability in $n$, for some $\chan\geq 1$.
%

To achieve our final bound, we will handle both the $\log{n}/\log{\chan}$
and the $\log\log{n}$ term separately.
We begin with the $\log{n}/\log{\chan}$ term.
This term is non-trivial only when $\chan < n$,
so assume this holds for the following argument.
Consider the {\em restricted} wake-up problem where
the adversary guarantees to activate exactly two nodes.
We
can construct a new wake-up algorithm, ${\cal A'}$,
that simulates ${\cal A}$ running in a multichannel
network to solve restricted wake-up in a network with collision detection and 
only a {\em single} channel.
We will then use ${\cal A'}$ to solve the hitting game in a manner that generates a contradiction.

In more detail, the two nodes, $i'$ and $j'$,
running ${\cal A'}$ in an instance of the restricted wake-up problem,
will work together to simulate two nodes, $i$ and $j$, 
running ${\cal A}$ in a network with $\chan$ channels.
To implement this simulation, $i'$ keeps the simulated
state of $i$ and $j'$ keeps the state of $j$.
To maintain consistency, $i'$ and $j'$ use a {\em group}
of $\lceil \log{\chan}\rceil  + 1$ rounds to simulate each round of $i$ and $j$ running ${\cal A}$.

At the beginning of each such group, 
$i'$ and $j'$ advance their simulation of ${\cal A}$ just
far enough to determine the channel choice and broadcast behavior of $i$ and $j$, respectively.
At this point, $i'$ and $j'$ must coordinate their simulation to ensure that
they simulate the receive behavior of $i$ and $j$ in this round in a consistent manner.
To do so, in the first round of this group,
$i'$ (resp. $j'$) broadcasts if $i$ (resp. $j$) broadcasts in this simulated round.
If exactly one node broadcasts, wake-up is solved and we are done.
If neither node broadcasts, then both nodes receive and detect nothing (recall, we assume
${\cal A'}$ runs in a setting with a single channel and collision 
detection).
The two nodes can, at this point, simulate $i$ and $j$ also receiving and detecting nothing,
and skip ahead to the next simulated round and group.
The interesting case is if both nodes broadcast.
In this case, both nodes detect a collision in ${\cal A'}$.
To properly advance the simulation, $i'$ and $j'$ must decide
whether or not $i$ and $j$ should also detect a collision---they should
if $i$ and $j$ choose the same channel in this simulated round,
but should not if $i$ and $j$ choose different channels in this simulated round.

Let $c_i$ be the channel chosen by $i$ and $c_j$ the channel chosen by $j$
in this simulated round. 
Let $b_i$ be the binary representation of $c_i$, and $b_j$ the binary representation
of $c_j$. Notice, $|b_i| = |b_j| = \lceil \log{\chan} \rceil$.
To determine if $c_i$ and $c_j$ are equivalent,
$i'$ and $j'$ spend one round checking each bit in $c_i$ and $c_j$.
In each such round $k$, $i'$ broadcasts if bit $k$ of $b_i$ is $1$,
 and $j'$ broadcasts if bit $k$ of $b_j$ is $1$.
If $b_i \neq b_j$, then during one of these rounds exactly one node will broadcast,
solving wake-up.
If the nodes make it through all $\lceil \log{\chan} \rceil$ bits without solving wake-up,
then it follows that $c_i = c_j$.
This knowledge allows $i'$ and $j'$ to conclude their current simulated
round by simulating $i$ and $j$ both detecting a collision.

It is clear to see that $i'$ and $j'$ running ${\cal A'}$ on a single
channel with collision detection
correctly simulate $i$ and $j$ running ${\cal A}$ on $\chan$
channels with collision detection.
If ${\cal A}$ solves restricted wake-up in $f(n,\chan)$
rounds with high probability, then ${\cal A'}$ solves
restricted wake-up in $\leq g(n,\chan) = f(n,\chan)\cdot(\lceil \log{\chan} \rceil + 1)$ rounds,
with high probability.
Recall, however,
that we assumed
 $f(n,\chan) \in o(\log{n}/\log{\chan})$.
 It follows that $g(n,\chan) \in o(\log{n})$.
 We now have an algorithm, ${\cal A'}$,
 that solves restricted wake-up in a single channel with collision detection in $o(\log{n})$
 rounds.
 We can therefore directly apply the simulation strategy argument from
 Lemma~\ref{lem:multi:2} to ${\cal A'}$
 to prove the existence of a player that solves the restricted $k$-hitting
 game in $o(\log{k})$ time, also with high probability.
 This contradicts Theorem~\ref{thm:hitting:restricted}.

We now consider the $\log\log{n}$ term
of our lower bound. This term dominates our bound when $\chan$
is large (i.e., $\chan > n/\log{n})$.
To begin, as before, 
assume that some wake-up algorithm ${\cal A}$ solves
the problem in this setting in $o(\log\log{n})$ rounds for 
some $\chan$.
Also as before, we will confine our attention to restricted wake-up,
and construct a single channel algorithm ${\cal A'}$
that has two nodes $i'$ and $j'$ simulate two nodes
$i$ and $j$ running ${\cal A}$ with $\chan$ channels.
In the previous argument, the difficult case in this simulation is when $i$ and $j$
both broadcast.
To simulate collision detection properly, $i'$ and $j'$
must decide whether or not $i$ and $j$ chose the same channel.
To resolve this question, we can no longer directly apply the bit-by-bit approach
used above,
because if $\chan$ is large this would take too many rounds.

We must instead use the same type
of {\em simulation tree} argument
introduced in the proof of Theorem~\ref{thm:wakeup:cd:expectation}.
In particular, for each round, there are a constant number
of possible receive behaviors for $i$ and $j$.
We can consider a simulation tree that explores all possible
such behaviors at the cost of exponentiating the runtime.
If the path in the tree matching the correct receive behavior
solves wake-up, then the simulation will eventually 
test this path after no more than $2^{h+1}$ guesses,
where $h$ is the height.
Because we assume ${\cal A}$ solves the problem in $o(\log\log{n})$ rounds,
the simulation strategy used by ${\cal A'}$ solves it
in $o(\log{n})$ rounds.
As before, we now have a solution to restricted wake-up 
that solves the problem in a single channel with collision 
detection in $o(\log{n})$ rounds.
We obtain our contradiction with Theorem~\ref{thm:hitting:restricted}
in the same manner as with the first term.
\end{proof}


\section{Lower Bound for Global Broadcast}
 \label{sec:bcast}
 
 We now turn our attention to proving a lower bound for global broadcast.
 The tight bound for this problem is $\Theta(D\log{(n/D)} + \log^2{n})$ rounds
 for a connected multihop network of size $n$ with diameter $D$.
The lower bound holds for expected time solutions and the matching upper
bounds hold with high probability~\cite{baryehuda:1992,kowalski:2005,czumaj:2006}.
The $\log^2{n}$ term was established in~\cite{alon:1991}, where it was shown
to hold even for centralized algorithms,  
and the $D\log{(n/D)}$ term was later proved by Kushilevitz and Mansour~\cite{kushilevitz:1998}.
Below,
 we apply our new technique
 to reprove (and significantly simplify) the
  $\Omega(D\log{(n/D)})$ lower bound for expected time solutions to global broadcast.
(We do not also reprove the $\Omega(\log^2{n})$ term
 because this bound is proved using the same combinatorial result from~\cite{alon:1991} 
  that provides the mathematical foundation for our technique.
  To reprove the result of~\cite{alon:1991} using~\cite{alon:1991} is needlessly circular.)
 Perhaps surprisingly, we show that this bound holds even if we allow multiple
 channels and collision detection, both of which are assumptions that break the original 
 lower bound from~\cite{kushilevitz:1998}.
 Notice, this indicates a interesting split with the wake-up problem for which
 these assumptions {\em improve} the achievable time complexity.

   \vfill \eject
   
 \begin{theorem}
Let ${\cal A}$ be an algorithm that solves
global broadcast in $f(n,\chan,D)$ rounds, in expectation,
in the radio network model with collision detection,
$\chan\geq 1$ channels, and a network topology with diameter $D$.
It follows that for every $\chan,D \geq 1$, $f(n,\chan,D) \in \Omega(D\log{(n/D)})$.
\label{thm:bcast}
\end{theorem}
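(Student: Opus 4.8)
The plan is to recast the Kushilevitz--Mansour layered argument as a sequence of independent $k$-hitting games, which lets the expected-time hitting bound (Theorem~\ref{thm:hitting:expectation}) do all the work and makes the robustness to collision detection and multiple channels transparent. I would build a network with layers $L_0=\{s\},L_1,\dots,L_D$, where $s$ is the source and each $L_i$ ($i\ge 1$) is a clique of $g=\Theta(n/D)$ nodes; this gives $n$ nodes and diameter $D$ (the size/diameter bookkeeping, including the degenerate regime $n/D=O(1)$ where the claimed bound is merely $\Omega(D)$ and follows because each of the $\Theta(D)$ crossings costs at least one round, is routine). The edges across the $(i{-}1,i)$ boundary encode a hidden \emph{gateway set} $T_i\subseteq L_i$: every node of $T_i$ is adjacent to every node of $L_{i+1}$, while $s$ is adjacent to all of $L_1$ so the initial crossing is free. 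Crucially, $L_{i+1}$'s only in-neighbors are the nodes of $T_i$, so an (inactive) node of $L_{i+1}$ receives the message exactly when precisely one node of $T_i$ broadcasts alone on the default channel. I would have the adversary choose the sets $T_1,\dots,T_{D-1}$ independently, each uniformly from the family $\mathscr{S}$ of Lemma~\ref{lem:alon} with $\ell=g$.

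Second, I would set up the per-stage reduction. Let $\tau_i$ be the first round at which every node of $L_i$ holds the message (so completion time $\ge\tau_D$ and $\tau_1=O(1)$), and for each round $r$ let $P_r\subseteq L_i$ be the set of $L_i$-nodes that broadcast on channel~$1$ in round $r$ while $L_i$ is the active frontier. Because $L_{i+1}$'s in-neighbors are exactly $T_i$, stage $i$ is crossed in the first round $r>\tau_i$ with $|P_r\cap T_i|=1$; that is, the sequence $P_{\tau_i+1},P_{\tau_i+2},\dots$ is precisely a stream of $k$-hitting proposals against the target $T_i$ on the value set $L_i\cong[g]$, one proposal per round.

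Third, the heart of the argument is an \emph{independence} claim: before stage $i$ is crossed, no node's behavior depends on $T_i$, and hence the proposal stream $\{P_r\}_{r>\tau_i}$ is probabilistically independent of $T_i$. This is exactly where collision detection and multiple channels are shown to be harmless. Every layer beyond $L_i$ is still inactive and therefore silent until stage $i$ is crossed, so the only signals the active nodes observe---silence, messages, or detected collisions, on any of the $\chan$ channels---are determined by the broadcasts of $L_i$ together with upstream interference from already-active layers, none of which is a function of $T_i$; and although extra channels let the frontier reduce its internal contention, only a lone broadcast on channel~$1$ can wake the (channel-$1$-listening, receive-only) nodes of $L_{i+1}$. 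I would make this precise by conditioning on the history $\mathcal{F}_{\tau_i}$ and on $\{T_j\}_{j\ne i}$, under which $T_i$ remains uniform over $\mathscr{S}$ while the future proposals keep a fixed conditional law. Then the counting estimate behind Theorem~\ref{thm:hitting:expectation} (from the second part of Lemma~\ref{lem:alon}) applies verbatim: any fewer than $\tfrac{\beta\log g}{2}$ proposals hit a uniformly chosen member of $\mathscr{S}$ with probability below $\tfrac12$, so with probability exceeding $\tfrac12$ the crossing needs more than $m=\lfloor\tfrac{\beta\log g}{2}\rfloor-1$ rounds, giving $E[\tau_{i+1}-\tau_i]\ge m/2\in\Omega(\log g)$.

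Finally I would assemble the bound by linearity: $f(n,\chan,D)\ge E[\tau_D]\ge\sum_{i=1}^{D-1}E[\tau_{i+1}-\tau_i]\ge(D-1)\cdot\Omega(\log g)=\Omega(D\log(n/D))$. The step I expect to be the main obstacle is the independence claim, i.e.\ arguing rigorously that the frontier extracts \emph{zero} information about the downstream gateway set before it is crossed, even when equipped with collision detection and many channels; everything else (the graph construction, the per-round identification of proposals, and the summation) is routine once that claim is in hand.
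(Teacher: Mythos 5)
Your proposal is correct and follows essentially the same route as the paper: the same layered clique network with hidden gateway sets drawn uniformly from the family $\mathscr{S}$ of Lemma~\ref{lem:alon}, the same identification of each round's channel-$1$ broadcasters in the frontier layer as a hitting-game proposal, the same use of the expected-time hitting bound per stage, and the same summation by linearity of expectation. The only difference is presentational---the paper packages the argument as a reduction to an explicitly defined multi-hitting game played by a simulating player (which makes your ``independence claim'' automatic, since the player generates proposals without ever knowing $T_i$), whereas you argue directly on the execution via conditioning---but the mathematical content is identical.
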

\begin{proof}
To account for the multihop nature of the problem,
we introduce a natural generalization of the hitting games introduced in Section~\ref{sec:hitting}.
In more detail, the $(k,k')$-{\em multi-hitting game}, for $1 \leq k' \leq k$, is a variation
of $k$-hitting game in which we run $k'$ consecutive instances of the
($\lfloor k/k' \rfloor$)-hitting game,
requiring the player to win instance $i \in \{1,...,k' -1\}$ before proceeding to instance $i+1$.
There are two technical points in this definition
 that aid the below argument: assume that the referee selects all $k'$ targets at the beginning of the game,
and assume that the referee reveals to the player the target for instance $i$ at the end of the round in which the player wins that instance.

It is straightforward to use the existing bounds from Section~\ref{sec:hitting}
to bound this generalization. 
Consider a particular instance of this game for a particular player ${\cal P}$.
 let $X_i$, for $i\in \{1,...,k'\}$, be the time required
to win trial $i$ of the game, and let $Y = X_1 + X_2 + ... + X_{k'}$ be the time required to win the
full multi-set game.
Let $\mathbb{E}_{\cal P}[X_i]$, for each relevant $i$,
be the expected time for ${\cal P}$ to win trial $i$,
and $\mathbb{E}_{\cal P}[Y]$ be the expected time to win the full multi-set game.
Imagine that we apply the referee target selection strategy from Theorem~\ref{thm:hitting:expectation} (our lower bound
on the expected time for $k$-hitting) for each trial in the multi-set hitting game, using independent randomness
to make each selection.
For each $X_i$, $i\in [k']$, it follows from Theorem~\ref{thm:hitting:expectation} that regardless of ${\cal P}$'s definition,
 $\mathbb{E}_{\cal P}[X_i] \geq \log{\lfloor k/k' \rfloor}$.
We can now lower bound $\mathbb{E}_{\cal P}[Y]$ by leveraging linearity of expectation:
%
$\mathbb{E}_{\cal P}[Y] = \mathbb{E}_{\cal P} \big[  \sum_i^{k'} X_i \big] = \sum_i^{k'} \mathbb{E}_{\cal P}[X_i] \geq k'\log{\lfloor k/k'\rfloor}.$

 Having bounded $\mathbb{E}_{\cal P}[Y] $,
  we can proceed to our main argument.
 Fix some ${\cal A}$ that solves global broadcast in $f(n,\chan,D)$ rounds, in expectation, in networks
 of size $n$ with $\chan$ channels and diameter $D$.
 Fix any valid values for $\chan$ and $D$.
 We will now prove the existence of a network of diameter $D$ in 
 which ${\cal A}$ requires $D\log{(n/D)}$ rounds to solve broadcast in expectation,
 even when
 provided $\chan$ channels and collision detection. 
To do so, we deploy a variant of the simulation strategy introduced in Section~\ref{sec:sim},
that will be used by a multi-hitting game player ${\cal P}_{{\cal A,C},D}$
to play the hitting game by simulating ${\cal A}$ running in specific diameter $D$ network with $\chan$ channels 
and collision detection.
In particular,
our player simulates ${\cal A}$ on a 
 network consisting of $D+1$ layers, $L_1,L_2,...,L_D,L_{D+1}$,
 where the first $D$ layers each include $\lfloor n/D \rfloor$ nodes,
 and the last layer includes at least $1$ node (if $D$ divides $n$ evenly,
 then we can add an extra node to the system to populate $L_{D+1}$, without affecting
 the asymptotic bounds below; otherwise we add the leftover node(s) to this last layer).
  For the sake of construction, for each $L_i$, assign unique labels from $\{1,...,k\}$ to the nodes in $L_i$.
  Let $T_i$ be the target chosen by the referee for trial $i$ of the instance of the multi-hitting game being played by our player . 
  In our construction, we connect $L_i$ and $L_{i+1}$ by including an edge
  from every node in $L_i$ with a label corresponding to a value in $T_i$ to every node in $L_{i+1}$.
    Notice, the player simulating this network {\em does not know} these $T_i$ values in advance,
    and therefore does not know the full topology of the network on which it is simulating ${\cal A}$,
     but we will now
  show this does not matter as the simulation will remain consistent. Finally, the nodes within each layer are connected as a clique.
(Notice that this graph satisfies the unit disk graph property---strengthening our bound even beyond what
is stated in the above theorem to indicate it holds even if we restrict our attention to unit disk graphs: an easier
setting than general graphs.)
  
  The simulation begins with the player choosing some node in $L_1$ as the source.  
In each round $r$ of the simulation,
let $\hat i$ be the largest value of $i$ such that the nodes in $L_i$ are active (i.e., have the message).
Let $B^r_{\hat i}$ be the nodes in $L_{\hat i}$ that broadcast in $r$, if any.
If $\chan >1$, let $B^r_{\hat i}$ be the nodes in $L_{\hat i}$ that broadcast on the default
channel where inactive nodes listen (i.e., channel $1$).
The player uses $B^r_{\hat i}$ as its proposal in this round of the mutli-hitting game.
 The key insight of this reduction is that the player only needs to simulate
 communication between $L_{\hat i}$ and $L_{\hat i+1}$ if exactly one node connecting $L_{\hat i}$ to $L_{\hat i+1}$
 is in $B^r_{\hat i}$. When this occurs, the player will learn of this fact, because its corresponding
 guess in the hitting game will win this instance of the game (and once it wins instance $i$
 for the first time, it learns $T_i$, so it can, moving forward, successfully simulate all future communication between
 these two layers). 
 The player knows the full topology of all smaller layers, so it can always correctly simulate the behavior of nodes
 broadcasting in these layers as well.
 
 Collision detection and multiple channels break the original proof of~\cite{kushilevitz:1998} because their
 argument requires that nodes in the same layer receive silence in all rounds before
 they advance the message. If the active nodes in a layer had collision detection, for example, 
 they could quickly achieve some communication using collisions, at which point the argument of \cite{kushilevitz:1998} fails.
 Our argument can tolerate such intra-layer communication as it focuses only on the externally observable
 (i.e., broadcast) behavior of the layer.
 
 We conclude by noting that the player using this strategy will win the multi-hitting game
 when the message arrives at $L_{D+1}$. 
  By assumption, this occurs in expected time of $f(n,\chan,D)$ rounds
  By our above bound on $\mathbb{E}[Y]$,
  and the fact that $k$ corresponds to $n$ and $k'$ to $D$,
   it must follow that $f(n,\chan,D) \in \Omega(D\log{(n/D)})$, as needed.
  \end{proof}

\section*{Acknowledgments}

The author acknowledges Mohsen Ghaffari for his many helpful conversations
regarding the combinatorial results from~\cite{alon:1991,ghaffari:2013} that form
the core of the hitting game lower bounds.
The author also acknowledges Sebastian Daum and Fabian Kuhn who co-authored with myself
the first paper to apply this general strategy to a classical radio network lower bound~\cite{daum:2012}.

%



 
\bibliographystyle{abbrv}
\bibliography{../../nsf}

\end{document}